\newcommand\noi{\noindent}
\renewcommand\appendix{\par
  \setcounter{section}{0}%
  \setcounter{subsection}{0}%
  \setcounter{equation}{0}%
  \setcounter{table}{0}%------------ << add
  \setcounter{figure}{0}%----------- << add
  \gdef\theequation{\@Alph\c@section.\arabic{equation}}%
  \gdef\thefigure{\@Alph\c@section.\arabic{figure}}%
  \gdef\thetable{\@Alph\c@section.\arabic{table}}%
  \gdef\thesection{\Alph{section}}%
  \@addtoreset{equation}{section}%
  \@addtoreset{table}{section}%----- << add
  \@addtoreset{figure}{section}%---- << add
}
\begin{document}
\begin{frontmatter}
\title{Mining frequent items in unstructured P2P networks}
\author [unile] {Massimo~Cafaro\corref{cor1}}
\ead{massimo.cafaro@unisalento.it}
\cortext[cor1]{Corresponding author}
\author [unile] {Italo Epicoco}
\ead{italo.epicoco@unisalento.it}
\author [unile] {Marco Pulimeno}
\ead{marco.pulimeno@unisalento.it}
\address[unile]{University of Salento, Lecce, Italy}

\begin{abstract} 
Large scale decentralized systems, such as P2P, sensor or IoT device networks are becoming increasingly common, and require robust  protocols to address the challenges posed by the distribution of data and the large number of peers belonging to the network. In this paper, we deal with the problem of mining frequent items in unstructured P2P networks. This problem, of practical importance, has many useful applications. We design P2PSS, a fully decentralized, gossip--based protocol for frequent items discovery, leveraging the Space-Saving algorithm. We formally prove the correctness and theoretical error bound. Extensive experimental results clearly show that P2PSS provides very good accuracy and scalability, also in the presence of highly dynamic P2P networks with churning. To the best of our knowledge, this is the first gossip--based distributed algorithm providing strong theoretical guarantees for both the Approximate Frequent Items Problem in Unstructured P2P Networks and for the frequency estimation of discovered frequent items.
\end{abstract}

\begin{keyword}
% keywords here, in the form: keyword \sep keyword
frequent items, unstructured P2P, gossip protocols.
% PACS codes here, in the form: \PACS code \sep code
%\PACS
\end{keyword}

\newtheorem{prob}{Problem}
\newtheorem{thm}{Theorem}
\newtheorem{lem}[thm]{Lemma}
\newdefinition{rmk}{Remark}
\newproof{pf}{Proof}
% \newproof{pot}{Proof of Theorem \ref{thm2}}
\newtheorem{prop}[thm]{Proposition}
\newtheorem*{cor}{Corollary}
\newdefinition{defn}{Definition}
\newtheorem{conj}{Conjecture}
\newtheorem{exmp}{Example}
\newtheorem{case}{Case}
%\newtheorem{claim}[thm]{Claim}
%\newtheorem{fact}[thm]{Fact}
%\newtheorem{assumption}[thm]{Assumption}

%\tableofcontents
\end{frontmatter}

%\tableofcontents

\section*{Declaration of interest}

Declarations of interest: none.

\section{Introduction}
\label{intro}

Large scale decentralized systems, such as P2P (Peer to Peer), sensor or IoT (Internet of Things) device networks are becoming increasingly common. As an example, P2P based systems underlie popular sharing platforms allowing data exchange among a large number of users. However, dissemination and delivery of valuable data and information is complicated by the distributed nature of the network. The lack of a central authority in charge of administration forces the need for fully decentralized protocols in which the peers interact and collaborate towards a common goal.

In the case of structured P2P networks, the underlying topology may be exploited in the design of a distributed protocol. However, for unstructured networks, the lack of a specific topology must be also taken into account. A possibility, commonly found in many protocols, is to impose a topology: these protocols rely on the construction of a spanning tree, which is then used for information dissemination. A popular alternative is the use of gossip--based communication mechanisms. Informally, a gossip--based protocol can be though of as a sequence of rounds in which each peer randomly selects one or more peers, exchanges its local state information with the selected peers and updates its local state by using the received information. 

Owing to the randomized choices made by the peers in each round of the distributed computation, it may appear somewhat surprising that gossip--based protocols can provide a fast and accurate solution to the problem of providing a consistent global view of the information locally stored at each peer. 

 In this paper, we deal with the problem of mining frequent items in unstructured P2P networks. Mining of frequent items (also known as heavy hitters) is a problem of fundamental importance, both from a theoretical and practical perspective, as witnessed by the considerable attention and recognition received, which led to a huge number of related publications. Different scientific communities refer to the problem as \textit{market basket analysis} \cite{Brin}, \textit{hot list analysis} \cite{Gibbons} and \textit{iceberg query} \cite{Fang98computingiceberg}, \cite{Beyer99bottom-upcomputation}. 
 
 Among the many possible applications, consider a large P2P network such as BitTorrent and the need to collect useful statistics on the service, such as the most frequently accessed files. The relevant  information is distributed amongst the peers, therefore applications that need a global view of such information/statistics encounter particular difficulties to operate, and a distributed algorithm is required to solve the problem. The optimization of cache performance in distributed storage systems and the performance improvement of distributed information retrieval in search engines obviously require the knowledge of the most frequently accessed data and, respectively, metadata. Distributed frequent items algorithm can also help detecting Internet worms or DDoS (Distributed Denial of Service) attacks to a network, by respectively tracking frequently recurring bit strings, or frequently accessed web servers, and reporting frequencies above a specified threshold \cite{MENG201629}. The problem of detecting \textit{superspreaders}, which are sources that connect to a large number of distinct destinations, is also useful in P2P networks, where it could be used to find peers that talk to a lot of other peers without keeping per-peer information as in traditional approaches. 
 
 Other possible applications concern frequent queries, globally across the whole network:
 
  \begin{itemize}
  \item Popular products. The input may be the page views of products on Amazon yesterday; heavy hitters are then the most frequently viewed products;
  \item Popular search queries. The input may consist of all of the searches on Google yesterday; heavy hitters are then searches made most often;
  \item TCP flows. The input may be the data packets passing through a network switch, each annotated with a source-destination pair of IP addresses. The heavy hitters are then the flows that are sending the most traffic.
\end{itemize}
 
 We recall here other applications, including network traffic analysis \cite{DemaineLM02},  \cite{Estan}, \cite{Pan}, analysis of web logs \cite{Charikar}, Computational and theoretical Linguistics \cite{CICLing}.

 The problem can be solved by designating one of the peers as a central manager, and letting each peer communicate its local information to the manager peer. Once the whole dataset has been obtained, the manager peer solves the problem sequentially by scanning and processing as required the dataset, in order to aggregate the information. However, this kind of solution incurs considerable communication; besides, it may also be slower. Therefore, this kind of approach is not practical for large datasets, since in this case the central manager becomes a bottleneck.
 
 Our P2PSS algorithm can be briefly described as follows. Each peer processes, by using the Space-Saving algorithm, its local stream of data (or, alternatively, its local dataset) and determines its local frequent items. In order to retrieve the global frequent items, the peers engage in a gossip--based distributed averaging protocol. In each round, they exchange and update their local state, consisting of their Space-Saving stream summary data structure and their current estimate of the number of items in the union of the local streams (or datasets) and of the number of peers in the network.
 
 The contributions of this work are the following ones: (i) we design P2PSS, a fully distributed and gossip--based protocol for frequent items discovery, leveraging the Space-Saving algorithm \cite{Metwally2006}; (ii) we formally prove the correctness and theoretical error bound of P2PSS; (iii) extensive experimental results clearly show that P2PSS provides very good accuracy and scalability.

This paper is organized as follows. We present in Section~\ref{preliminary-defs} preliminary definitions and concepts that shall be used in the rest of the manuscript. Next, we present our P2PSS algorithm in Section~\ref{alg}. We provide an in--depth theoretical analysis of the algorithm, formally proving its correctness and theoretical error bound, in Section~\ref{analysis}. We present and discuss extensive experimental results in Section~\ref{results}, and  recall related work in Section~\ref{related}. Finally, we draw our conclusions in Section~\ref{conclusions}.

\section{Preliminary definitions}
\label{preliminary-defs}

In this Section we introduce preliminary definitions and the notation used throughout the paper. We first introduce the frequent items problem, both in its exact and approximate form, and then we recap the definitions related to the gossip-based protocols.

\subsection{Frequent items problem}
Let $n$ be the number of items in the input $\mathcal{N} = \{s_1,s_2,\ldots,s_n\}$, and $\mathcal{U}=\{1,2,\ldots,m\}$ a universe set from which items are drawn. Therefore, $m = \left| \mathcal{U} \right|$ is the maximum number of possible distinct items in the input. In the sequel, we shall use the notation $[m]$ to denote the set ${1,2,\ldots,m}$.

\begin{defn}
Given an input $\mathcal{N}$ consisting of $n$ elements, the frequency of an item $i\in[m]$ is the number of occurrences of $i$ in $\mathcal{N}$, that is, $f_i = \left\vert\{j\in [n]: s_j = i\}\right\vert$.
\end{defn}

We denote by $\textbf{f} = (f_1,\ldots,f_m)$ the frequency vector, i.e. the vector whose $i$th entry is the frequency of item $i$. It is worth noting here that $||\textbf{f}||_1$, which is the 1-norm of $\textbf{f}$, is by definition the total number of occurrences of all of the items; for this particular setting of the problem, $||\textbf{f}||_1 = n$ (in other settings the input may consist of pairs $\{(s_i, w_i)\}_{i=1,2,\ldots,n}$ where each occurrence $s_i$ is associated to a weight $w_i$; the definition of frequency of an item changes accordingly).

Letting $0 < \phi < 1$ be a support threshold, we can define $\phi$-frequent items as follows.

\begin{defn}
Given an input $\mathcal{N}$ consisting of $n$ elements, and a real value $0<\phi<1$, the $\phi$-frequent items of $\mathcal{N}$ are all those items whose frequency is above $\phi n$, i.e. the elements in the set $F = \{s \in [m]: f_s > \phi n\}$. 
\end{defn}

We are now ready to state the problem of finding the exact $\phi$-frequent items of an input stream.

\begin{prob}
	\label{exactprob}
	(Exact Frequent Items Problem) Given an input $\mathcal{N}$  consisting of $n$ elements and a value $0<\phi<1$, the \textit{Exact Frequent Items Problem} requires finding the set $F = \{s \in [m]: f_s > \phi n\}$ of all the $\phi$-frequent items.
\end{prob}

Problem~\ref{exactprob} is hard or not feasible with limited time and memory resources. In particular, it requires space linear in $n$. Therefore, we shall refer to an approximate version of the problem that accepts the presence of false positives, but can be solved with limited space.

\begin{prob}
\label{approxprob}
(Approximate Frequent Items Problem) Given an input $\mathcal{N}$ consisting of $n$ elements drawn from the universe $[m]$, a value ${0 < \phi < 1}$ and a value ${0<\epsilon<\phi}$, the \textit{Approximate Frequent Items Problem} consists in finding a set $H$, such that:
	\begin{enumerate}
		\item $H$ contains all of the items $s$ with frequency $f_s > \phi n$ ($\phi$-frequent items);		
		\item $H$ does not contain any item $s$ such that ${f_s \leq (\phi-\epsilon)n}$.
	\end{enumerate}
\end{prob}

In this paper, we are concerned with the Approximate Frequent Items Problem in the context of unstructured P2P networks, formally defined as follows.

 \begin{prob}
\label{approxprobp2p}
(Approximate Frequent Items Problem in Unstructured P2P Networks) 
Given an unstructured P2P network consisting of $p$ peers, each peer $l$ must process an input $\mathcal{N}_l$ consisting of $n_l$ elements drawn from the universe $[m]$. Let $n = \sum\limits_{l = 1}^p {{n_l}}$, ${0 < \phi < 1}$ and ${0<\epsilon<\phi}$. The \textit{Approximate Frequent Items Problem in Unstructured P2P Networks} consists in finding a set $H$, such that:
	\begin{enumerate}
		\item $H$ contains all of the items $s$ with frequency $f_s > \phi n$ ($\phi$-frequent items);		
		\item $H$ does not contain any item $s$ such that ${f_s \leq (\phi-\epsilon)n}$.
	\end{enumerate}
\end{prob}

\subsection{Gossip--based protocol}

A gossip--based protocol \cite{Demers:1987} is a synchronous distributed algorithm consisting of periodic rounds. In each of the rounds, a peer (or agent) randomly selects one or more of its neighbours, exchanges its local state with them and finally updates its local state. The information is disseminated through the network by using one of the following possible communication styles: (i) \textit{push}, (ii) \textit{pull} or (iii) \textit{push--pull}. The main difference between push and pull is that in the former a peer randomly selects the peers to whom it wants to send its local state, whilst in the latter it randomly selects the peers from whom to receive the local state. Finally, in the hybrid push--pull communication style, a peer randomly selects the peers to send to and from whom to receive the local state. In this synchronous distributed model it is assumed that updating the local state of a peer is done in constant time, i.e., with $O(1)$ worst-case time complexity; moreover, the duration of a round is such that each peer can complete a push--pull communication within the round. 

We are interested in a specific gossip--based protocol, which is called \textit{distributed averaging}, and can be considered as a consensus protocol. We are given a network of peers described by an undirected graph $G = (V, E)$, where $V = \{1,\ldots,p\}$ is the set of peers' identifiers, and $E$ is the set of edges modelling the communication links between pairs of peers. We assume, for the purpose of our theoretical analysis, that peers and communication links do not fail, and that neither new peers can join the network nor existing peers can leave it (the so-called \textit{churning} phenomenon). Therefore, the graph $G$ describing the underlying network topology is not time-varying. However, it is worth noting here that our algorithm also works in time-varying graphs in which the network can change owing to failures and churning and we shall show an experimental evidence of that in Section~\ref{effect_of_churn}, in which we discuss the effect of churn.

In \textit{uniform gossiping}, a peer $i$ can communicate with a randomly selected peer $j$. Instead, in our scenario the communication among the peers is restricted to neighbour peers i.e., two peers $i$ and $j$ are allowed to communicate if and only if the edge $(i, j) \in E$; we assume that communication links are bidirectional: the existence of the edge $(i, j)$ implies the existence of the edge $(j, i)$. Initially, each peer $i$ is provided with  or computes a real number $v_i$; the distributed averaging problem requires designing a distributed algorithm allowing each peer computing the average $v_{{\rm avg}} = \frac{1}{p}\sum_{i=1}^p v_i$ by exchanging information only with  its neighbours. Letting $v_i(r)$ be the peer $i$ estimated value of $v_{{\rm avg}}$ at round $r$, a gossip interaction between peers $i$ and $j$ updates both peers' variables so that at round $r+1$ it holds that $v_{i}(r+1)=v_j(r+1) = \frac{1}{2}(v_i(r)+v_j(r))$. Of course, for a peer $i$ which is not gossiping at round $r$ it holds that $v_i(r+1) = v_i(r)$. It can be shown that distributed averaging converges exponentially fast to the target value $v_{{\rm avg}}$. In general, a peer is allowed to gossip with at most one peer at a time. In our algorithm, we allow each peer the possibility of gossiping with a predefined number of neighbours. We call \textit{fan-out} $fo$ of peer $i$ the number of its neighbours with which it communicates in each round; therefore, $1 \leq fo \leq \left\vert \{j: (i, j) \in E\} \right\vert $. Therefore, we explicitly allow two  or more  pairs of peers gossiping at the same time, with the constraint that the pairs have no peer in common. We formalize this notion in the following definition.

\begin{defn}
 Two gossip pairs of peers $(i,j)$ and $(x,y)$
  are \textit{noninteracting} if neither $i$ nor $j$ equals either $x$ or $y$.
\end{defn}

In our algorithm multiple non-interacting  pairs of allowable gossips  may occur simultaneously. Non-interactivity is required in order to preserve and guarantee correctness of the results; in the literature non-interactivity is also called \textit{atomic} push--pull communication: given two peers $i$ and $j$, if peer $i$ sends a push message to $j$, then peer $i$ can not receive in the same round any intervening push message from any other peer $k$ before receiving the pull message from $j$ corresponding to its initial push message. 

It is worth noting here that our algorithm do not require explicitly assigning identifiers to the peers, and we do so only for convenience, in order to simplify the analysis; however, we do assume that each peer can distinguish its neighbours.

\section{The P2PSS algorithm}
\label{alg}

The main idea of our P2PSS algorithm is to let each peer determine its local frequent items by processing its local stream of data (or, alternatively, its local dataset) with the Space-Saving algorithm. Then, the peers engage in a gossip--based distributed averaging protocol, exchanging their local state which consists of the Space-Saving stream summary data structure obtained after processing the input stream, and two estimates related respectively to the number of items in the union of the local streams and to the number of peers in the network.

P2PSS is shown as pseudo-code in Algorithm~\ref{p2pssalg}. It consists of several procedures. The \textsc{initialization} procedure requires the following parameters: $l$, the peer's identifier; $\mathcal{N}_l$, the local dataset to be processed by peer $l$; $C$, the convergence factor (whose role shall be explained in Section~\ref{jelasity}); $k$, the number of counters to be used for the Space-Saving stream summary data structure; $R$, the number of rounds to be performed by the distributed algorithm; $p^*$, an estimate of the number of peers in the network (we only require $p^* \geq p$); $\phi$, the threshold to be used to determine the frequent items; $\epsilon$, the error tolerance and $0 < \delta < 1$, the probability of failure of the algorithm. Each peer $l$ initializes a Space-Saving stream summary data structure with $k$ counters, sets the current round $r$ to zero and its estimate $\tilde{n}_{r,l}$ of the average number of items over all of the peers to the number of items in its local dataset. The variable $\tilde{n}_{r,l}$ is therefore an estimate for the quantity $\bar{n} = \frac{1}{p} \sum_{l=1}^p \left\vert\mathcal{N}_l\right\vert$. Then, the peer whose identifier is 1 sets $\tilde{q}_{r,l}$ to 1 and all of the other peers sets this value to zero. The variable $\tilde{q}_{r,l}$ is used to estimate the number $p$ of peers by using the distributed averaging protocol: indeed, upon convergence this value approaches with high probability $1/p$. Next, each peer processes its local dataset  $\mathcal{N}_l$ by using the Space-Saving algorithm, obtaining as a result the stream summary $\mathcal{S}_{r,l}$ containing its local frequent items. It is worth noting here that $\mathcal{N}_l$ does not need to be a locally stored dataset: indeed, the input can be a stream and, as such, its items may be processed one at a time in a streaming fashion, without requiring explicitly local storage. The peer local state is a tuple $state_{r,l}$ consisting of the peer's local summary $\mathcal{S}_{r,l}$, and the estimates $\tilde{n}_{r,l}$ and $\tilde{q}_{r,l}$.

The \textsc{gossip} procedure lasts for $R$ rounds. During each round a peer increments $r$, the current round, selects $fo$ (the fan-out) neighbours uniformly at random and sends to each of them its local state in a message of type \textit{push}. Upon receiving a message, each peer executes the \textsc{on\_receive} procedure. From the message, the peer extracts the message's type, sender and state sent. A message is processed accordingly to its type as follows. A push message is handled in two steps. In the first one, the peer updates its local state by using the state received; this is done by invoking the \textsc{update} procedure that we shall describe later. In the second one, the peer sends back to the sender, in a message of type \textit{pull}, its updated local state. A pull message is handled by a peer setting its local state equal to the state received.

\begin{algorithm}
\begin{algorithmic}[1]
\caption{\textsc{P2PSS}: P2P Space-Saving}	
\label{p2pssalg}

\Procedure{Initialization}{$l$, $\mathcal{N}_l$, $C$, $k$, $R$, $f$, $p^*$, $\phi$, $\epsilon$, $\delta$}
\Comment{initialization of node $l$}
\State $r \leftarrow 0$
\State $\tilde{n}_{r,l} \leftarrow \left\vert\mathcal{N}_l\right\vert$
\If{$l == 1$}
	\State $\tilde{q}_{r,l} \leftarrow 1$
\Else
	\State $\tilde{q}_{r,l} \leftarrow 0$
\EndIf
\State $\mathcal{S}_{r,l} \leftarrow$ \Call{SpaceSaving}{$\mathcal{N}_l$, $k$}
\State $state_{r,l} \leftarrow (\mathcal{S}_{r,l}, \tilde{n}_{r,l}, \tilde{q}_{r,l})$
\EndProcedure

\Procedure{GOSSIP}{}
\For{$r = 0$ to $R$}
	\State $neighbours \leftarrow$ select $fo$ random neighbours
	\ForAll {$i \in neighbours$}
		\State \Call{SEND}{$push$, $i$, $state_{r,l}$}
	\EndFor
\EndFor
\EndProcedure

\Procedure{ON\_RECEIVE}{$msg$}
\State $type \leftarrow msg.type$
\State $j \leftarrow msg.sender$
\State $state \leftarrow msg.state$
\If{$type == push$}
	\State $state_{r+1,l} \leftarrow$ \Call{UPDATE}{$state$, $state_{r,l}$}
	\State \Call{SEND}{$pull$, $j$, $state_{r+1,l}$}
\EndIf
\If{$type == pull$}
	\State $state_{r+1,l} \leftarrow state$
\EndIf
\EndProcedure

\Procedure{QUERY}{}
\State $(\mathcal{S}_{r,l}, \tilde{n}_{r,l}, \tilde{q}_{r,l}) \leftarrow state_{r,l}$
\State $\epsilon^* \leftarrow p^* \times \sqrt{\frac{C^{r}}{\delta}}$
\State $t \leftarrow \phi \tilde{n}_{r,l} \frac{1-\epsilon^*}{1+\epsilon^*}$
\State $\tilde{p}_{r,l} \leftarrow 1/\tilde{q}_{r,l}$
\State $H \leftarrow \emptyset$
\ForAll{counter $c \in \mathcal{S}_{r,l}$}
	\If{$c.f > t$}
		\State $H \leftarrow H \cup (c.i, c.f \times \tilde{p}_{r,l})$
	\EndIf
\EndFor
\State \Return $H$
\EndProcedure
\end{algorithmic}	
\end{algorithm}

The \textsc{update} procedure, shown in pseudo-code as Algorithm~\ref{update}, works as follows: the two local summaries of peers $i$ and $j$ are merged by invoking the \textsc{merge} procedure reported in Algorithm~\ref{merge}, producing the stream summary $\mathcal{S}$; since we want to implement a distributed averaging protocol, we scan the counters of the stream summary $\mathcal{S}$, and for each counter $c$ we update its frequency $c.f$ dividing it by 2; finally, we compute as required by the averaging protocol the estimates $\tilde{n}$ and $\tilde{q}$ and return the updated state just computed.

Here we briefly recap how merging works: for each item belonging to both the local summaries of peers $i$ and $j$, we insert the item in the output stream summary with an estimated frequency equal to the sum of its estimated frequencies in the two input summaries. If an item belongs to just one of the summaries, its estimated frequency in the output stream summary is equal instead to the sum of its estimated frequency and the minimum estimated frequency in the other summary. Finally, if the output stream summary contains more than $k$ counters (the output summary may contain at most $2k$ items; this happens when all of the items in both summaries are distinct), we prune the summary and return as output summary only the first $k$ items with the greatest estimated frequencies, otherwise we return the output summary as is.

Finally, the user can issue a \textsc{query} procedure to an arbitrary peer to retrieve the frequent items determined by our algorithm. This is done by computing $t$, a threshold that determines whether an item is a candidate frequent or not, and $\tilde{p}_{r,l}$, the estimate of $p$. Note that $t$ is defined in terms of $\epsilon^*$, whose meaning shall be explained in the Section devoted to the theoretical analysis of the algorithm. Then, we initialize $H$ to an empty set and scan each of the counters in the local stream summary $\mathcal{S}_{r,l}$, checking whether the frequency $c.f$ of the item $c.i$ stored in the counter $c$ is greater than the threshold $t$ or not. For each item which is determined to be candidate frequent, we add the tuple $(c.i, c.f \times \tilde{p}_{r,l})$ to $H$ and finally we return $H$.

\begin{algorithm}
\begin{algorithmic}[1]
\caption{\textsc{UPDATE}: Update procedure}
\label{update}

\Procedure{UPDATE}{$state_i$, $state_j$}
\State $(\mathcal{S}_i, \tilde{n}_i, \tilde{q}_i) \leftarrow state_i$
\State $(\mathcal{S}_j, \tilde{n}_j, \tilde{q}_j) \leftarrow state_j$
\State $\mathcal{S} \leftarrow$ \Call{MERGE}{$\mathcal{S}_i$, $\mathcal{S}_j$}
\ForAll{counter $c \in \mathcal{S}$}
	\State $c.f \leftarrow \frac{c.f}{2}$
\EndFor	
	
\State $\tilde{n} \leftarrow \frac{\tilde{n}_i + \tilde{n}_j}{2}$  
\State $\tilde{q} \leftarrow \frac{\tilde{q}_i + \tilde{q}_j}{2}$  
\State $state \leftarrow (\mathcal{S}, \tilde{n}, \tilde{q})$
\State \Return $state$
\EndProcedure

\end{algorithmic}
\end{algorithm}

\begin{algorithm}
	\caption{Merge}
	\label{merge}
	\begin{algorithmic}
		
		\Require {$\mathcal{S}_1$, $\mathcal{S}_2$: vector representing summaries of $k$ counters ordered by item's frequency; $k$, number of counters in each summary;}
		
		\State $m_1 \leftarrow \mathcal{S}_1[0].\hat f $
		\Comment{minimum of all of the frequencies in $\mathcal{S}_1$}
		\State $m_2 \leftarrow \mathcal{S}_2[0].\hat f $
		\Comment {minimum of all of the frequencies in $\mathcal{S}_2$}
		
		\State $\mathcal{S}_M \leftarrow \emptyset$
		\ForAll{counter $\mathcal{S}_1[j]$ in $\mathcal{S}_1$}
		\State $new\_counter.i \leftarrow \mathcal{S}_1[j].i$	
		\State $counter_{\mathcal{S}_2} \leftarrow \mathcal{S}_2.$\Call{Find}{$\mathcal{S}_1[j].i$}
		\If{$counter_{\mathcal{S}_2}$}
		\State $new\_counter.\hat{f} \leftarrow \frac{1}{2} \left( \mathcal{S}_1[j].\hat{f} + counter_{\mathcal{S}_2}.\hat{f} \right)$
		\State $\mathcal{S}_2.$\Call{Remove}{$counter_{\mathcal{S}_2}$}
		\Else
		\State $new\_counter.\hat{f} \leftarrow \frac{1}{2} \left( \mathcal{S}_1[j].\hat{f} + m_2 \right )$
		\EndIf
		\State $\mathcal{S}_M.$\Call{Put}{$new\_counter$}
		\EndFor
		
		\ForAll{counter $\mathcal{S}_2[j]$ in $\mathcal{S}_2$}
		\State $new\_counter.i \leftarrow \mathcal{S}_2[j].i$
		\State $new\_counter.\hat{f} \leftarrow \frac{1}{2} \left( \mathcal{S}_2[j].\hat{f} + m_1 \right )$
		\State $\mathcal{S}_M$.\Call{Put}{$new\_counter$}
		\EndFor
		
		\State $\mathcal{S}_M$.\Call{Prune}{$k$}
		\Comment{Select $k$ counters with the greatest frequencies and delete the others}
		\State \Return $\mathcal{S}_M$
		
	\end{algorithmic}
\end{algorithm}

To better explain the P2PSS algorithm, we propose and discuss an example. Let us suppose that there are 4 peers, each with a stream summary holding 4 counters. Figure~\ref{example_a} shows the state of the stream summary for each peer before starting the gossip protocol. For each item (identified by a letter), its frequency is reported. Suppose that, during the first round, peer $p_0$ exchanges data with $p_1$ and peer $p_2$ with $p_3$. Figure~\ref{example_b} depicts the peers' stream summaries at the end of the first round. Supposing that in the second round $p_1$ exchanges data with $p_2$ and $p_0$ with $p_3$, Figure~\ref{example_c} provides the state of the stream summaries converged to the final values. Each summary reports the average estimate (with regard to the number of peers) of the items' frequency.

\begin{figure*}[h]
	\centering
	\begin{tabular}{ccc}		
		\subfloat[Initial state]{
			\includegraphics[width=0.3\textwidth]{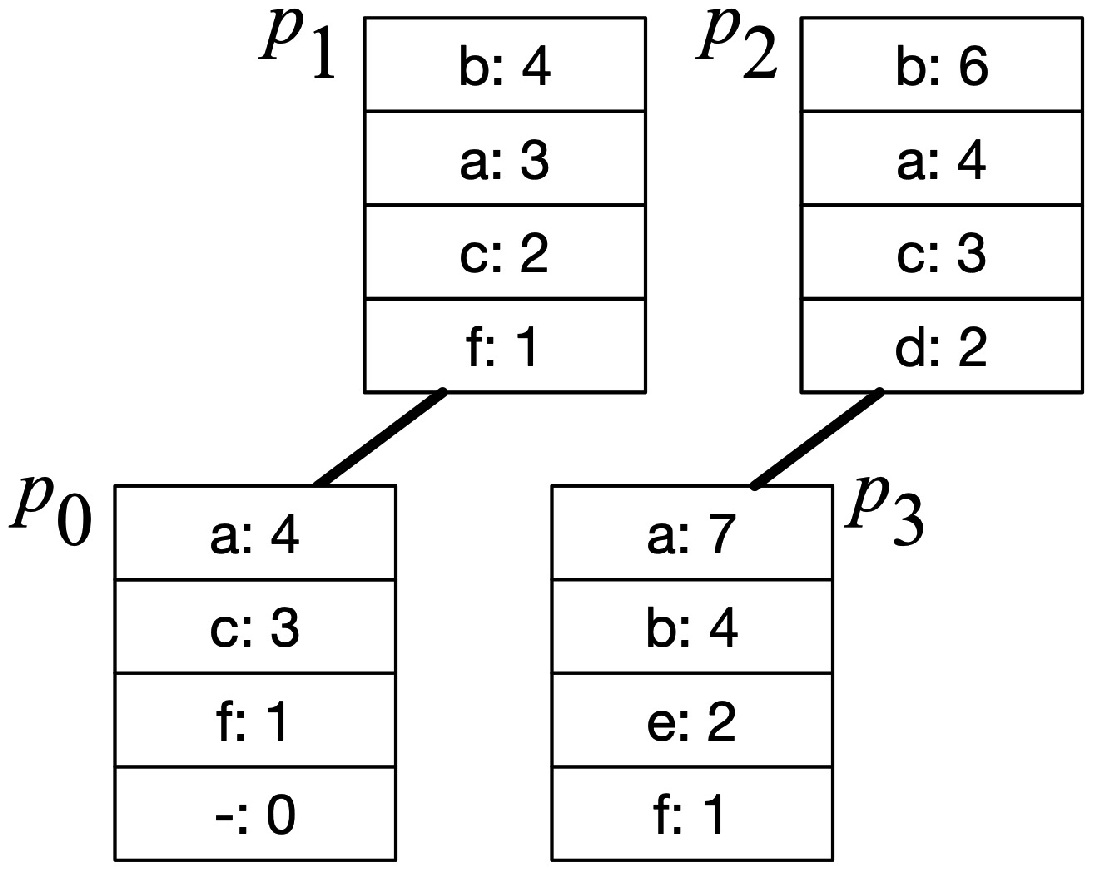}
			\label{example_a}
		} &
		
		\subfloat[State after first round]{
			\includegraphics[width=0.3\textwidth]{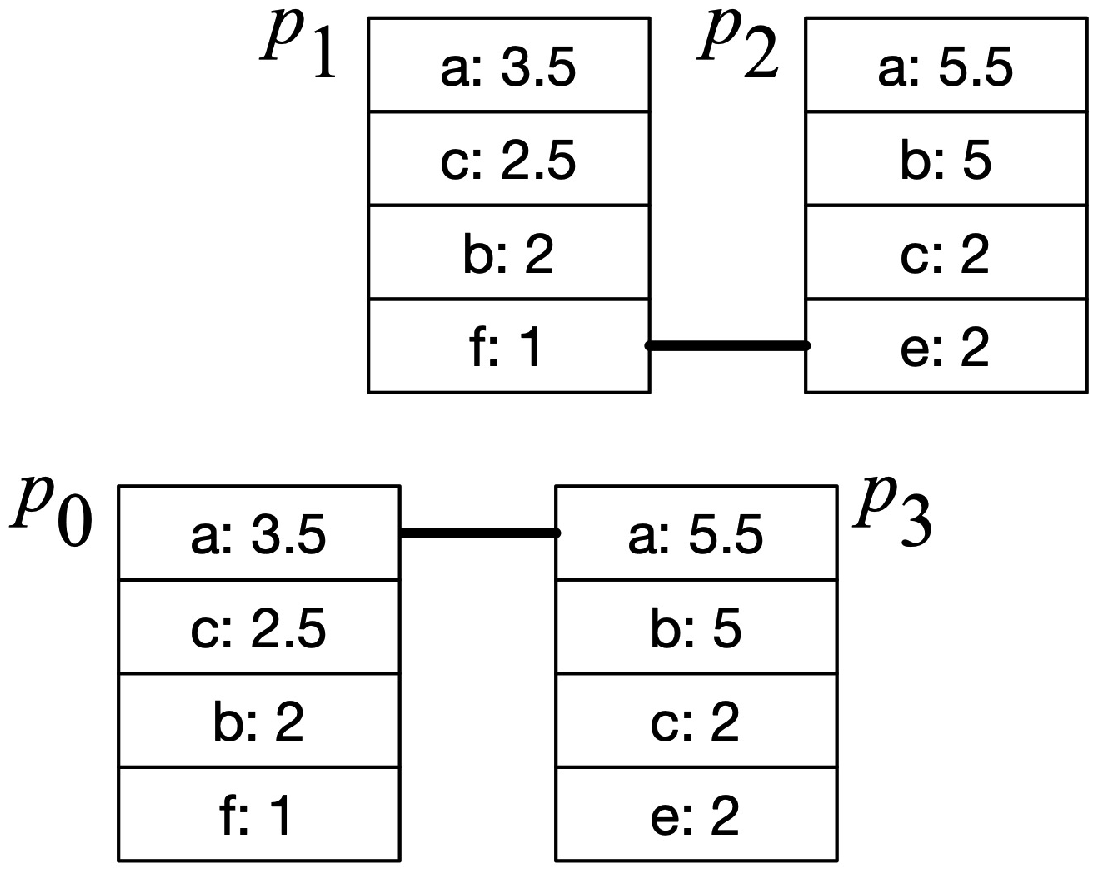}
			\label{example_b}
		} &
	  	
			\subfloat[Final state ]{
			\includegraphics[width=0.3\textwidth]{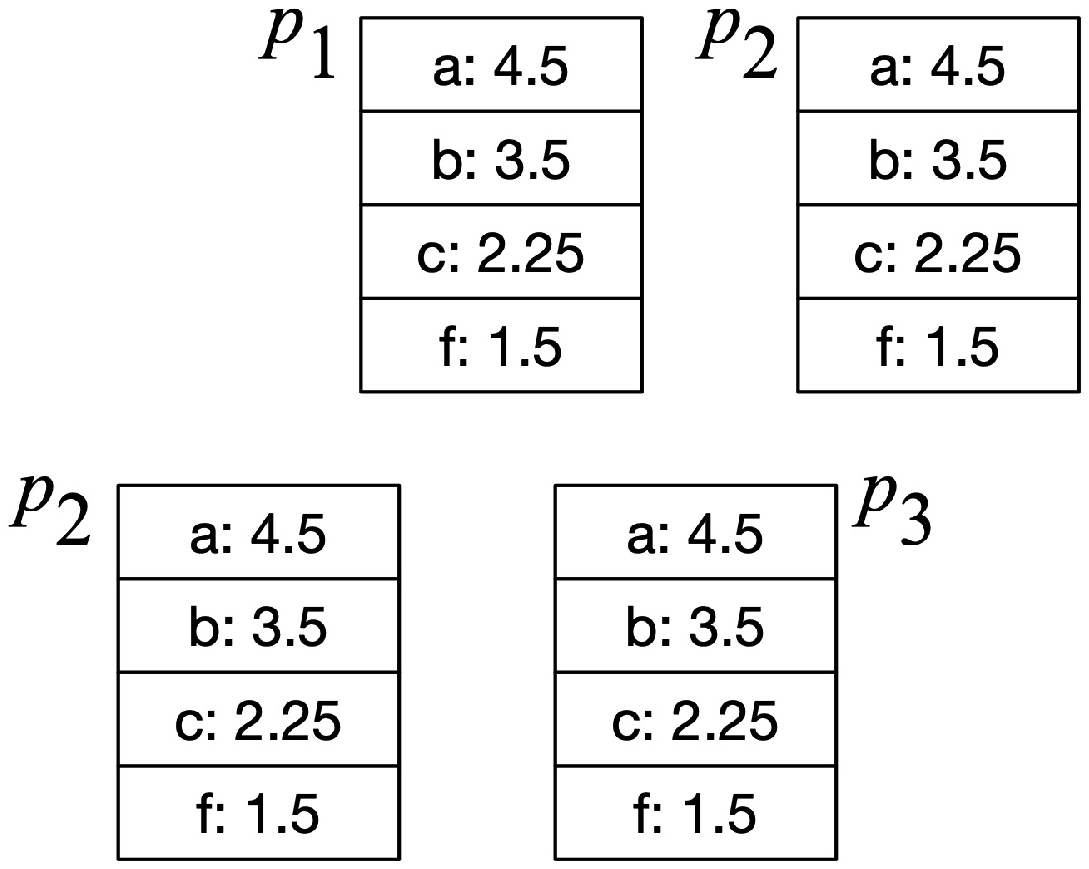}
			\label{example_c}
		} 
	\end{tabular}
	
	\caption{Example of P2PSS algorithm acting over 4 peers and a stream summary with 4 counters.} 
	\label{example}
\end{figure*}

\section {Theoretical analysis}
\label{analysis}

Before proceeding with our analysis, we need to recall the results by Jelasity et al. in \cite{Jelasity2005} on which we rely for our discussion. 
Jelasity et al. in the cited paper propose a gossip--based algorithm for computing the average value of numbers held by the nodes of a network. They show that the algorithm converges to the true average value and give an estimation of its convergence factor. Their reasoning is based on a centralized algorithm operating globally on the distributed state of the system that allows simplifying the theoretical analysis by conveniently simulating the gossip--based distributed version of the algorithm. Even though the analysis of \cite{Jelasity2005} relies on uniform gossiping (i.e., the underlying topology is described by a complete graph), there is no significant difference between the performance of randomized gossiping in complete graphs and sparse random graphs \cite{7161540} \cite{5462084} (this has been experimentally verified by Jelasity et al.). Therefore, in this Section we shall follow the Jelasity et al. strategy and show that \textsc{P2PSS} also converges and correctly solves the Approximate Frequent Items Problem in Unstructured P2P Networks.

\subsection{Jelasity's averaging algorithm}
\label{jelasity}

The centralized \textsc{AVG} algorithm by Jelasity et al., takes a vector $\boldsymbol{w}_r$ of length $p$ representing the state of the nodes after the $r$th round ($p$ is the number of nodes in the network and each component of the vector is a value held by a node) and produces a new vector $\boldsymbol{w}_{r+1} = \textsc{AVG}(\boldsymbol{w}_r)$ of the same length, representing the state of the system after another round of gossip. At each elementary step of \textsc{AVG}, two selected nodes update their state so that the vector $\boldsymbol{w}_r$ becomes:
\begin{equation}
\begin{split}
\boldsymbol{w}_{r}' = (w_{r,1}, w_{r,2}, \ldots, \frac{w_{r,i}+w_{r,j}}{2}, \ldots, \\
 \frac{w_{r,i}+w_{r,j}}{2}, \ldots, w_{r,p} ).
\end{split}
\end{equation}
\noindent After $p$ elementary steps \textsc{AVG} returns the vector $\boldsymbol{w}_{r+1}$. Through a proper selection of the pair of nodes, this algorithm can reproduce the behavior of the distributed gossip--based averaging algorithm introduced by Jelasity et al., since each call to \textsc{AVG} corresponds to a round of that algorithm. We refer the interested reader to \cite{Jelasity2005} for all of the details. 

Here, we only recall the results essential for our purposes. The averaging protocol proposed by Jelasity et al. and its centralized equivalent can be seen as variance reduction algorithms. Consider a variance measure $\sigma_r^2$ defined as:
\begin{equation}
\label{eq_sigma2}
\sigma_r^2 = \frac{1}{p-1} \sum_{l=1}^{p}\left(w_{r,l} - \bar{w}\right)^2,
\end{equation}
where $w_{r,l}$ is the value held by peer $l$ after $r$ rounds of the gossip algorithm and $\bar{w} = \frac{1}{p}\sum_{l=1}^p w_{0,l}$ is the mean of the initial values held by the peers. The authors in \cite{Jelasity2005} state that, if $\psi_k$ is a random variable denoting the number of times a node $k$ is chosen as a member of the pair of nodes exchanging their states during a round of the protocol, and each pair of values $w_{r,i}$ and $w_{r,j}$ selected by each call to \textsc{GetPair} are uncorrelated, then the following theorem holds.
\begin{thm}
	\label{Jelasity}
	{\rm \cite{Jelasity2005}} If \textsc{GetPair} has the following properties:
	\begin{enumerate}
		\item the random variables $\psi_1, \ldots , \psi_p$ are identically distributed (let $\psi$ denotes a random variable with this common distribution),
		\item after $(i, j)$ is returned by \textsc{GetPair}, the number of times $i$ and $j$ shall be selected by the remaining calls to \textsc{GetPair} have identical distributions,
	\end{enumerate}
	then we have:
	\begin{equation}
	\label{eq_conv_factor}
	\mathbb{E}[\sigma_{r+1}^2] \approx \mathbb{E}[2^{-\psi}] \mathbb{E}[\sigma_{r}^2].
	\end{equation}
\end{thm}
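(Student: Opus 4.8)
The plan is to follow the variance-reduction strategy sketched before the statement, tracking the total squared deviation $S_r := (p-1)\sigma_r^2 = \sum_{l=1}^p (w_{r,l}-\bar w)^2$ as it evolves through the $p$ elementary steps that constitute one call to \textsc{AVG}. First I would record the key invariant: a single elementary update $(w_i,w_j)\mapsto(\tfrac{w_i+w_j}{2},\tfrac{w_i+w_j}{2})$ preserves the sum $w_i+w_j$, and hence preserves the global mean $\bar w$. Consequently the deviations $d_{r,l}:=w_{r,l}-\bar w$ are always measured against the same fixed center, which is what makes it legitimate to compare $S_r$ across rounds.

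Next I would analyze the effect of one elementary step on the second moment. Writing $d_i,d_j$ for the two deviations involved, after the update each of the two affected slots carries deviation $\tfrac{d_i+d_j}{2}$, so the two-slot contribution to $S$ changes from $d_i^2+d_j^2$ to $\tfrac12(d_i+d_j)^2$. Taking expectations and invoking the standing hypothesis that the paired values $w_{r,i},w_{r,j}$ returned by \textsc{GetPair} are uncorrelated (and, in the natural mean-field reading that the $\approx$ sign signals, identically distributed), the cross term $\mathbb{E}[d_i d_j]$ vanishes, so the expected two-slot contribution passes from $\mathbb{E}[d_i^2]+\mathbb{E}[d_j^2]$ to one half of that amount. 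Equivalently, each participating node sees its expected squared deviation multiplied by $\tfrac12$.

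I would then compound this halving over the whole round and factor the resulting expectation. Since node $k$ is touched exactly $\psi_k$ times, iterating the single-step estimate gives $\mathbb{E}[d_{r+1,k}^2]\approx 2^{-\psi_k}\,\mathbb{E}[d_{r,k}^2]$, where property (2) of \textsc{GetPair}—that once a pair is drawn the residual selection counts of its two members are identically distributed—is what guarantees that the successive partners of $k$ play symmetric roles, so the factor $\tfrac12$ compounds cleanly rather than depending on where in the round the selections occur. Property (1) then makes the $\psi_k$ identically distributed with common law $\psi$, and the same symmetry lets me treat $\psi_k$ as independent of the value held at node $k$, so each summand factors as $\mathbb{E}[2^{-\psi_k}]\,\mathbb{E}[d_{r,k}^2]=\mathbb{E}[2^{-\psi}]\,\mathbb{E}[d_{r,k}^2]$. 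Summing over $k$ and dividing by $p-1$ converts $S$ back into $\sigma^2$ and yields $\mathbb{E}[\sigma_{r+1}^2]\approx\mathbb{E}[2^{-\psi}]\,\mathbb{E}[\sigma_r^2]$.

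I expect the genuine difficulty to lie precisely in the compounding step, which is also the origin of the approximation symbol. As soon as a node is averaged once it becomes perfectly correlated with its partner, so the ``uncorrelated pairs'' hypothesis cannot hold exactly for the later elementary steps within the same round; moreover the per-step halving is exact only when the two partners carry equal variance, whereas in general node $i$ moves from variance $v_i$ to $\tfrac14(v_i+v_j)$. Arguing that these induced correlations and variance mismatches are negligible—and that conditioning on $\psi_k$ does not bias the distribution of the partners' variances—is the delicate point, and it is exactly what hypotheses (1) and (2) are engineered to control.
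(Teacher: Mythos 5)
The paper itself contains no proof of this theorem: it is recalled verbatim, with citation, from Jelasity et al.\ \cite{Jelasity2005}, so the only meaningful baseline is the argument in that reference. Your reconstruction is essentially that argument — mean preservation, per-step halving of the paired slots' expected squared deviation under the uncorrelatedness hypothesis, compounding over the $\psi_k$ selections, and factoring via hypotheses (1)--(2) — and you correctly flag the within-round induced correlations and variance mismatches as exactly the approximations that the $\approx$ sign in \eqref{eq_conv_factor} is meant to absorb.
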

\vspace{2mm}
The random variable $\psi$ only depends on the particular implementation of \textsc{GetPair}. From eq. \eqref{eq_conv_factor}, the convergence factor is defined as:
\begin{equation}
\frac{\mathbb{E}[\sigma_{r+1}^2]}{ \mathbb{E}[\sigma_{r}^2]} = \mathbb{E}[2^{-\psi}];
\end{equation}

Therefore, the convergence factor depends on $\psi$ and, as a consequence, on the pair selection method. Jelasity et al. compute the convergence factor for different implementations of the pair selection method, but we are only interested in the one which allows simulating the distributed gossip--based averaging protocol, which they call \textsc{GetPair\_Distr}. This method consists in drawing a random permutation of the nodes and then, for each node in that permutation, choosing another random node in order to form a pair. For this selection method, the convergence factor is $C = \mathbb{E}[2^{-\psi}] = 1/(2\sqrt{e})$.

We now derive from Theorem~\ref{Jelasity} the following proposition.

\begin{prop}
	\label{prop1}
	Let $\delta$ be a user-defined probability, $w_{r,l}$ the value held by peer $l$ after $r$ rounds of the averaging protocol, $p$ the number of peers participating in the protocol, $C = \mathbb{E}[2^{-\psi}] = 1/(2\sqrt{e})$ the convergence factor and $\bar{w}$ the mean of the initial vector of values $\boldsymbol{w}_0$, i.e. $\bar{w} = 1/p \sum_{l=1}^{p} w_{0,l}$. Then, with probability $1 - \delta$ it holds that, for any peer $l$:
	\begin{equation}
	\label{eq_error_due_to_gossip}
	\left|w_{r,l} - \bar{w}\right| < \sqrt{(p-1) \sigma_0^2} \sqrt{\frac{C^r }{\delta}}
	\end{equation}
\end{prop}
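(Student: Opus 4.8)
The plan is to combine the expected variance-reduction estimate of Theorem~\ref{Jelasity} with Markov's inequality, after first isolating a purely deterministic bound that relates a single peer's deviation to the global variance $\sigma_r^2$.

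First I would iterate the one-round recurrence of Theorem~\ref{Jelasity}. Since $\mathbb{E}[\sigma_{r+1}^2] \approx C\,\mathbb{E}[\sigma_r^2]$ with $C = \mathbb{E}[2^{-\psi}]$, unrolling over $r$ rounds and using that $\sigma_0^2$ is the (deterministic) initial variance gives $\mathbb{E}[\sigma_r^2] \approx C^r \sigma_0^2$. Here I rely on the fact that each pairwise averaging update preserves the sum of the values, hence the mean $\bar w$ is invariant across rounds, so the same centering constant $\bar w$ legitimately appears in every $\sigma_r^2$.

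Next comes the deterministic step. For any fixed peer $l$, a single squared term cannot exceed the whole sum, so $(w_{r,l}-\bar w)^2 \le \sum_{k=1}^{p} (w_{r,k}-\bar w)^2 = (p-1)\sigma_r^2$, whence $\left|w_{r,l}-\bar w\right| \le \sqrt{(p-1)\sigma_r^2}$. This reduces the per-peer error bound to a tail bound on the single nonnegative random variable $\sigma_r^2$.

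Finally I would apply Markov's inequality. Since $\sigma_r^2 \ge 0$, we have $\Pr[\sigma_r^2 \ge a] \le \mathbb{E}[\sigma_r^2]/a \approx C^r\sigma_0^2/a$; choosing $a = C^r\sigma_0^2/\delta$ makes the right-hand side equal to $\delta$, so with probability at least $1-\delta$ it holds that $\sigma_r^2 < C^r\sigma_0^2/\delta$. Substituting this into the deterministic bound yields $\left|w_{r,l}-\bar w\right| < \sqrt{(p-1)\sigma_0^2}\,\sqrt{C^r/\delta}$, exactly the claimed inequality. The main subtlety is that Theorem~\ref{Jelasity} provides only an approximate equality for $\mathbb{E}[\sigma_{r+1}^2]$, so the iterated estimate $\mathbb{E}[\sigma_r^2]\approx C^r\sigma_0^2$, and hence the final bound, inherits that approximation; I would treat the relation as effectively exact in the spirit of the cited analysis. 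I would also note that no union bound over peers is needed: once the single event $\{\sigma_r^2 < C^r\sigma_0^2/\delta\}$ occurs, the deterministic inequality $\left|w_{r,l}-\bar w\right| \le \sqrt{(p-1)\sigma_r^2}$ holds simultaneously for every $l$, which is precisely what the phrase \emph{for any peer $l$} requires.
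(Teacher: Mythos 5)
Your proposal is correct and follows essentially the same route as the paper's proof: iterate the recurrence of Theorem~\ref{Jelasity} to get $\mathbb{E}[\sigma_r^2] = C^r\sigma_0^2$, apply Markov's inequality to $\sigma_r^2$ with threshold $\mathbb{E}[\sigma_r^2]/\delta$, and bound any single peer's squared deviation by the full sum $(p-1)\sigma_r^2$. Your added remarks (mean invariance justifying the fixed centering $\bar{w}$, treating the approximate recurrence as exact, and the observation that no union bound over peers is needed) are sound clarifications of points the paper leaves implicit, but the argument itself is the same.
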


\begin{proof}
	
	From eq. \eqref{eq_conv_factor} it follows that:
	
	\begin{equation}
	\label{eq_expected_sigma2}
	\mathbb{E}[\sigma_{r}^2] = \mathbb{E}[2^{-\psi}]^r \sigma_0^2;
	\end{equation}
	
	where $\sigma_0^2$ depends on the distribution of the initial numbers among the peers. Through the Markov inequality, we have that:
	
	\begin{equation}
	\mathbb{P}[\sigma_{r}^2 \ge \frac{\mathbb{E}[\sigma_{r}^2]}{\delta}] \le  \delta;
	\end{equation}
	
	\noindent or
	
	\begin{equation}
	\mathbb{P}[\sigma_{r}^2 < \frac{\mathbb{E}[\sigma_{r}^2]}{\delta}] \ge  1 - \delta.
	\end{equation}
	
	Considering eqs. \eqref{eq_sigma2} and \eqref{eq_expected_sigma2}, it holds that:
	
	\begin{equation}
	\mathbb{P}[\sum_{l=1}^{p}\left(w_{r,l} - \bar{w}\right)^2 < (p-1) \frac{C^r \sigma_0^2}{\delta}] \ge  1 - \delta.
	\end{equation}
	
	As a consequence, with probability at least $1-\delta$:
	
	\begin{equation}
	max_{l \in [p]} \left(w_{r,l} - \bar{w}\right)^2 \le \sum_{l=1}^{p}\left(w_{r,l} - \bar{w}\right)^2 < (p-1) \frac{C^r \sigma_0^2}{\delta},
	\end{equation}
	
	\noindent which implies:
	
	\begin{equation}
	max_{l \in [p]} \left|w_{r,l} - \bar{w}\right| < \sqrt{(p-1)\sigma_0^2} \sqrt{\frac{C^r }{\delta}}.
	\end{equation}
	
	This proves the proposition.	
\end{proof}

Eq. \eqref{eq_error_due_to_gossip} gives an upper bound on the error made by any peer in estimating the value $\bar{w}$ after $r$ rounds of the Jelasity's averaging algorithm. This bound is probabilistic and it is valid with probability greater than or equal to $1-\delta$.
%Since the bound is valid for all of the peers, from now on we refer to the value of each peer by omitting the $l$ subscript.

\subsection{Merging of Space-Saving summaries}
\label{merge_section}
\textsc{P2PSS} follows the same structure of the gossip--based averaging protocol by Jelasity et al., but it is based on the procedure introduced by Cafaro et al. in \cite{cafaro-pulimeno-tempesta} in order to merge Space-Saving summaries.  The Merge algorithm has been introduced in Section~\ref{alg}, here we briefly recap its properties. We shall use multisets to represent both the input streams and the corresponding summaries. 

\begin{defn}
	\label{multiset}
	A multiset $\mathcal{N}=(N, f_{\mathcal{N}})$ is a pair where $N$ is some set, called the underlying set of elements, and $f_{\mathcal{N}}: N \rightarrow \mathbb{N}$ is a function.
	The generalized indicator function of $\mathcal{N}$ is 
	\begin{equation}
	I_\mathcal{N} (x) := \left\{ {\begin{array}{*{20}c}
		{f_{\mathcal{N}}(x)} & {x \in N} , \\
		0 & {x \notin N},  \\
		\end{array} }\right.
	\end{equation}
	
	\noindent where the integer--valued function $f_{\mathcal{N}}$, for each $x \in N$, provides its  \textit{frequency} (or multiplicity), i.e., the number of occurrences of $x$ in $\mathcal{N}$. 
	\noi The cardinality of $\mathcal{N}$ is expressed by

	\begin{equation}
	\left\vert{\mathcal{N}}\right\vert := Card(\mathcal{N}) = \sum\limits_{x \in N} {I_\mathcal{N} (x)},
	\end{equation}
	
	\noi whilst the cardinality of the underlying set $N$ is
	
	\begin{equation}
	\left\vert{N}\right\vert := Card(N) = \sum\limits_{x \in N} {1}.
	\end{equation}
\end{defn}

A multiset, or bag,  is defined by a proper set (the support set) and a multiplicity function: it is a set where elements can be repeated, i.e., an element in a multiset can have multiplicity greater than one. 

Let $\mathcal{U} = [d]$ be the universe from which the items in input are drawn and let $\mathcal{N}_1 = (N_1, f_{\mathcal{N}_1})$ and $\mathcal{N}_2~=~(N_2,~f_{\mathcal{N}_2})$ be two input multisets, where $N_i\subseteq \mathcal{U}$ for $i=1,2$. Furthermore, let $\mathcal{S}_1 = (\Sigma_1, \hat{f}_{\mathcal{S}_1})$ and $\mathcal{S}_2 = (\Sigma_2, \hat{f}_{\mathcal{S}_2})$ be two Space-Saving summaries with at most $k$ distinct items, corresponding respectively to $\mathcal{N}_1$ and $\mathcal{N}_2$.
Let $\oplus_k$ be the merge operation described in \cite{cafaro-pulimeno-tempesta} and shown in pseudo-code as Algorithm~\ref{merge}, where subscript $k$ indicates the maximum number of distinct items in each involved summary. Then, the summary $\mathcal{S}_M = \mathcal{S}_1 \oplus_k \mathcal{S}_2$ is a summary for  $\mathcal{N} = \mathcal{N}_1 \uplus \mathcal{N}_2$ with at most $k$ distinct items that continues to guarantee the same bounds on size and error of the original summaries. In particular, the following relations hold, for each item $e \in N$, being $\hat{f}_{\mathcal{S}_M}^{min}$ the minimum frequency in $\mathcal{S}_M$ and  $\hat{f}_{\mathcal{S}_M}^{min} = 0$ when  $\left\vert{\Sigma}_M\right\vert < k$.

\begin{equation}
\label{ssprop1}
\left\vert{\mathcal{S}_M}\right\vert \leq \left\vert{\mathcal{N}}\right\vert,
\end{equation}

\begin{equation}
\label{ssprop2}
\hat{f}_{\mathcal{S}_M}(e) - \hat{f}_{\mathcal{S}_M}^{min} \leq f_\mathcal{N}(e) \leq \hat{f}_{\mathcal{S}
	_M}(e),  \qquad e \in \Sigma_M,
\end{equation}

\begin{equation}
\label{ssprop3}
f_\mathcal{N}(e)  \leq \hat{f}_{\mathcal{S}_M}^{min}, \qquad e \notin \Sigma_M,
\end{equation}

\begin{equation}
\label{ssprop4}
\hat{f}_{\mathcal{S}_M}^{min}  \leq \left\lfloor\frac{\left\vert{\mathcal{N}}\right\vert}{k}\right\rfloor.
\end{equation}

\vspace{3mm}
The properties in eqs. \eqref{ssprop1}--\eqref{ssprop4} guarantee that if  $\mathcal{S}_1$ and $\mathcal{S}_2$ respect the same properties (and it has been proven that Space-Saving summaries do), then $\mathcal{S}_M$ contains all of the $\phi$-frequent items of $\mathcal{N}$ with $\phi > 1/k$ and solves the Approximate Frequent Items Problem in Unstructured P2P Networks with tolerance $\epsilon = 1/k$. 

\subsection{Convergence of P2PSS}

Let $\mathcal{M}$ be the class of all the multisets with support set included in $ \mathcal{U}$. 
Let us introduce the operation $\oslash_d: \mathcal{M} \rightarrow \mathcal{M}$, so that $\oslash_d(\mathcal{N}) = (N, f_N/d))$, i.e, the multiset $\oslash_d(\mathcal{N})$ has the same support set of $\mathcal{N}$, but each  element has a fraction $1/d$ of the multiplicity it has in $\mathcal{N}$, where we explicitly allow for fractional multiplicities. We have that $\biguplus_{i=1}^d\oslash_d(\mathcal{N}) =  \oslash_{\frac{1}{d}}(\oslash_d(\mathcal{N})) = \mathcal{N}$ and it is immediate to see that if $\mathcal{S}$ is a summary for $\mathcal{N}$, then $\oslash_d(\mathcal{S})$ is a summary for $\oslash_d(\mathcal{N})$. In fact, if we divide by $d$ all of the terms in eqs. \eqref{ssprop1}--\eqref{ssprop4}, the same relations continue to hold. Furthermore, it holds that $\biguplus_{i=1}^d\oslash_d(\mathcal{S}) = \oslash_{\frac{1}{d}}(\oslash_d(\mathcal{S})) = \mathcal{S}$.

Following the Jelasity et al. approach, we introduce \textsc{AVG-Merge} as Algorithm~\ref{algo1}. This is a centralized algorithm that simulates the distributed \textsc{P2PSS} algorithm. \textsc{AVG-Merge}, through the selection method \textsc{GetPair\_Distr}, operates on the global state of the network by simulating the distributed \textsc{P2PSS} protocol and allowing us to simplify its theoretical analysis.

\begin{algorithm}
	\begin{algorithmic}
		\caption{\textsc{AVG-Merge}: global Space-Saving summaries average merging}
		\label{algo1}
		\Require{$\boldsymbol{\mathcal{S}}_r = (\mathcal{S}_{r,1}, \mathcal{S}_{r,2}, \ldots, \mathcal{S}_{r,p})$: a vector of Space-Saving summaries, $k$: the maximum number of distinct items in each summary, $p$: the number of peers}
		\State $l \leftarrow 0$
		\While{$l < p$}
		\State $(i, j) \leftarrow $ \Call{GetPair}{\ } 	
		\State $\mathcal{S}_{r,i} \leftarrow \mathcal{S}_{r,j} \leftarrow  \oslash_2(\mathcal{S}_{r,i} \oplus_k \mathcal{S}_{r,j})$	
		\State $l \leftarrow l + 1$
		\EndWhile
		\State \Return $\boldsymbol{\mathcal{S}}_r$ as $\boldsymbol{\mathcal{S}}_{r+1}$ 
		
	\end{algorithmic}
\end{algorithm}

\vspace{4mm}
Algorithm~\ref{algo1} is similar to \textsc{AVG} algorithm discussed in Section~\ref{jelasity}, but it operates on multisets rather than single values. Initially, each peer computes a local summary on its input stream, through the execution of Space-Saving with $k$ counters, then the distributed protocol starts. 

The initial distributed state of the system can be represented by the vector of the local summaries $\boldsymbol{\mathcal{S}}_0~=~(\mathcal{S}_{0,1},\mathcal{S}_{0,2},\ldots,\mathcal{S}_{0,p})$, where $p$ is the number of peers participating in the protocol. Another vector is naturally associated to $\boldsymbol{\mathcal{S}}_0$: the vector of the local input streams $\boldsymbol{\mathcal{N}}_0 = (\mathcal{N}_{0,1}, \mathcal{N}_{0,2}, \ldots, \mathcal{N}_{0,p})$. We have that $\biguplus_{l = 1}^p \mathcal{N}_{0,l} = \mathcal{N}$, where we denote by $\mathcal{N}$ the global input stream.

Each call to \textsc{AVG-Merge} corresponds to a round of \textsc{P2PSS}. It modifies $\boldsymbol{\mathcal{S}}_r$, the vector of the summaries held by the peers at the end of round $r$, producing the vector $\boldsymbol{\mathcal{S}}_{r+1}$. Furthermore,  implicitly also $\boldsymbol{\mathcal{N}}_{r}$, the vector of local input streams to which the summaries refer, changes to $\boldsymbol{\mathcal{N}}_{r+1}$. In fact, let $\boldsymbol{\mathcal{S}}_r$ and $\boldsymbol{\mathcal{N}}_r$ be the vectors of the summaries owned by each peer and the corresponding partitions of the input stream $\mathcal{N}$ after the $r$th round. Then, after each iteration of the main loop of \textsc{AVG-Merge}, letting $(i, j)$ be the pair of communicating peers, i.e. the pair selected by \textsc{GetPair}, the vector of summaries becomes: 

\begin{equation}
\begin{split}
\boldsymbol{\mathcal{S}}_{r}' = (\mathcal{S}_{r,1}, \mathcal{S}_{r,2}, \ldots, \oslash_2(\mathcal{S}_{r,i} \oplus_k \mathcal{S}_{r,j}), \ldots,\\
 \oslash_2(\mathcal{S}_{r,i} \oplus_k \mathcal{S}_{r,j}), \ldots, \mathcal{S}_{r,p}),
\end{split}
\end{equation}

\noindent and the corresponding vector of partitions of the input stream shall change to:

\begin{equation}
\label{elem-it-msets}
\begin{split}
\boldsymbol{\mathcal{N}}_{r}' = (\mathcal{N}_{r,1}, \mathcal{N}_{r,2}, \ldots, \oslash_2(\mathcal{N}_{r,i} \uplus \mathcal{N}_{r,j}), \ldots, \\
\oslash_2(\mathcal{N}_{r,i} \uplus \mathcal{N}_{r,j}), \ldots, \mathcal{N}_{r,p}).
\end{split}
\end{equation}

From what we said on the operations $\oplus$ and $\oslash$, after each elementary iteration of \textsc{AVG-Merge}, two invariants hold: 
\begin{enumerate}
	\item each peer $l$ owns a summary $\mathcal{S}_{r,l}$ which is a correct Space-Saving summary for the portion of input stream $\mathcal{N}_{r,l}$;
	\item  $\biguplus_{l = 1}^p \mathcal{N}_{r,l} = \mathcal{N}$.
\end{enumerate}

These invariants remain true after each iteration of the main loop of \textsc{AVG-Merge} and, consequently, after each call to \textsc{AVG-Merge}, that is after each round of the \textsc{P2PSS} distributed protocol, when we derive from the vectors $\boldsymbol{\mathcal{S}}_r$ and $\boldsymbol{\mathcal{N}}_r$, the new vectors $\boldsymbol{\mathcal{S}}_{r+1}$ and $\boldsymbol{\mathcal{N}}_{r+1}$. 

We can state that, for $r \rightarrow \infty$, the two vectors $\boldsymbol{\mathcal{S}}_r$ and $\boldsymbol{\mathcal{N}}_r$ converge respectively to:

\begin{equation}
\boldsymbol{\mathcal{S}}_{\infty} = \left(\mathcal{S}_{\rm avg}, \mathcal{S}_{\rm avg}, \ldots, \mathcal{S}_{\rm avg}\right)
\end{equation}

\noindent and 

\begin{equation}
\label{conv-msets}
\boldsymbol{\mathcal{N}}_{\infty} = \left(\mathcal{N}_{\rm avg}, \mathcal{N}_{\rm avg}, \ldots, \mathcal{N}_{\rm avg}\right),
\end{equation}

\noindent where $\mathcal{N}_{\rm avg} = \oslash_p(\mathcal{N})$ and $\mathcal{S}_{\rm avg}$ is a correct summary of $\mathcal{N}_{\rm avg}$. 

This means that all of the peers converge to a summary of $\oslash_p(\mathcal{N})$, from which, for the properties of the operations $\oplus$ and $\oslash$, a correct summary for $\mathcal{N}$ can be derived by computing $\oslash_\frac{1}{p}(\mathcal{S}_{\rm avg})$ (we actually need to know the number of peers, which is not always the case, but we shall see in the following how we can estimate $p$), i.e. \textsc{P2PSS} converges. 

Thanks to the invariants discussed above, in order to prove the convergence of the summaries to $\mathcal{S}_{\rm avg}$,  it's enough to verify that the  local input streams implicitly induced by the algorithm converge to $\mathcal{N}_{\rm avg}$.

We can represent each initial  local input stream $\mathcal{N}_{0,l}$ for $l = 1,2, \ldots, p$, as the frequencies' vector of the items in that stream, $\boldsymbol{\tilde{f}}_{0,l} = (\tilde{f}_{0,l,1}, \tilde{f}_{0,l,2}, \ldots, \tilde{f}_{0, l,d})$. Each value $\tilde{f}_{0,l,i}$ corresponds to the frequency that item $i$ has in the initial local stream held by peer $l$. In this representation the operator $\oslash_p$ on a multiset translates to a multiplication of the frequencies' vector corresponding to that multiset by the scalar $1/p$. 

Now, the implicit transformation that the local streams of the selected peers, $i$ and $j$, undergo at each elementary iteration of \textsc{AVG-Merge}, i.e., eq. \eqref{elem-it-msets}, can be rewritten as:

\begin{equation}
\label{elem-it-fvecs}
\begin{split}
\boldsymbol{\tilde{F}}_r' = (\boldsymbol{\tilde{f}}_{r,1}, \boldsymbol{\tilde{f}}_{r,2}, \ldots, \frac{1}{2} (\boldsymbol{\tilde{f}}_{r,i} + \boldsymbol{\tilde{f}}_{r,j}), \ldots, \\
\frac{1}{2} (\boldsymbol{\tilde{f}}_{r,i} + \boldsymbol{\tilde{f}}_{r,j}), \ldots, \boldsymbol{\tilde{f}}_{r,p}),
\end{split}
\end{equation}

\noindent where $\boldsymbol{\tilde{F}}_r$ is a matrix whose columns are the peers' vectors of frequencies after $r$ rounds, i.e. each $\boldsymbol{\tilde{f}}_{r,l}$ is the frequencies' vector correspoding to the the multiset  ${\mathcal{N}}_{r,l}$. This matrix corresponds to the vector of multisets $\boldsymbol{\mathcal{N}}_r$ in eq. \eqref{elem-it-msets}.

Eventually, it can be recognized in eq. \eqref{elem-it-fvecs} the elementary step of the Jelasity's protocol applied in parallel to each one of the components of the frequencies' vectors of peers $i$ and $j$. We already know that the Jelasity's averaging protocol converges to the average of the values initially owned by the peers. Thus, for $r  \rightarrow \infty$,  $\boldsymbol{\tilde{F}}_r$ converges to: 

\begin{equation}
\boldsymbol{\tilde{F}}_\infty = (\boldsymbol{f}_{\rm avg}, \boldsymbol{f}_{\rm avg}, \ldots, \boldsymbol{f}_{\rm avg})
\end{equation}

\noindent where $\boldsymbol{f}_{\rm avg}$ is:

\begin{equation}
\boldsymbol{f}_{\rm avg} = (\bar{f}_1, \bar{f}_2, \ldots, \bar{f}_d),
\end{equation}

\noindent with $\bar{f}_i = \frac{1}{p} \sum_{l=1}^{p} \tilde{f}_{0,l,i}, \text{ for } i = 1,2, \ldots, d$ which is the representation as frequencies' vector of the multiset $\mathcal{N}_{\rm avg}$ in eq. \eqref{conv-msets}, proving the convergence.

\subsection{Estimating the number of peers}
As shown in the previous paragraph we need to estimate $p$, the number of peers participating in the protocol, in order to estimate the global frequencies of the items included in the final summary of a peer. 

We can do that executing in parallel with \textsc{P2PSS} an instance of the Jelasity's averaging protocol with initial values equal to $0$, except for one peer which is assigned the value $1$. In this way, the average of the values initially held by the peers is $1/p$ and we can estimate it with an error which depends on the number of rounds executed. We now analyze this error and its bound. 

According to eq. \eqref{eq_sigma2}, we have that $\sigma_0^2 = 1/p$. Let $\tilde{p}_{r,l}$ be  the estimation of the number of peers $p$ at round $r$ by the peer $l$, and $\tilde{q}_{r,l} = 1/\tilde{p}_{r,l}$. From eq. \eqref{eq_error_due_to_gossip}, it holds that, with probability $1-\delta$:
\begin{equation}
\left| \tilde{q}_{r,l} - \frac{1}{p}\right| < \sqrt{\frac{p-1}{p}} \sqrt{\frac{C^r }{\delta}} < \sqrt{\frac{C^r }{\delta}}
\end{equation}
Setting $\bar{\epsilon} = \sqrt{\frac{C^r }{\delta}}$, we have that:
\begin{equation}
\frac{1}{p} - \bar{\epsilon} < \tilde{q}_{r,l} < \frac{1}{p} + \bar{\epsilon}
\end{equation}
Assuming the constraint $\bar{\epsilon} < 1/p$, all of the members of the previous relation are positive, hence it holds that:

\begin{equation}
\label{error_bound_on_p_1}
\frac{p}{1+p\bar{\epsilon}} < \tilde{p}_{r,l} < \frac{p}{1-p\bar{\epsilon}} 
\end{equation}

The problem with eq. \eqref{error_bound_on_p_1} is that the estimation error bounds depend on $p$, but we may not know $p$ in advance. To overcome this problem, we introduce the value $p^*~\ge~p$, that is an estimate of the maximum number of peers we expect in the network, and we compute new bounds based on this value. Under the constraint $p^* \geq p$, we can be confident on the new computed bounds, though they may be weaker.

Let us set $\epsilon^* = p^* \bar{\epsilon}$. Given the constraint on $\bar{\epsilon}$, it holds that $0 < \epsilon^* < 1$, and, with probability $1-\delta$, for any peer $l = 1,2,\ldots,p$:

\begin{equation}
\label{error_bound_on_p_2}
\frac{p}{1+\epsilon^*} < \tilde{p}_{r,l} < \frac{p}{1-\epsilon^*} 
\end{equation}

\subsection{Gossip-based approximation}
In the discussion on the convergence of \textsc{P2PSS}, we have seen that, at round $r$ and for a peer $l$, the summary $\mathcal{S}_{r,l}$ held by that peer implicitly refers to a stream represented by the multiset $\mathcal{N}_{r,l}$, or the frequencies' vector $\boldsymbol{\tilde{f}}_{r,l}$. Thus, the eqs. \eqref{ssprop1}--\eqref{ssprop4} are valid for $\mathcal{S}_{r,l}$ with reference to $\mathcal{N}_{r,l}$. As a consequence, we need to compute how far the frequencies of items in $\boldsymbol{f}_{r,l}$ are from those in $\boldsymbol{f}_{\rm avg}$, that is the vector of true average frequencies.

For what we said in the previous paragraph we can do that by referring to the Jelasity's protocol and eq. \eqref{eq_error_due_to_gossip}. Let us denote by $f_i$ the global frequency of item $i$ and let $\tilde{f}_{r,l,i}$ be the estimation of the average frequency of that item, i.e. $f_i/p$ by peer $l$, after round $r$. According to eq. \eqref{eq_error_due_to_gossip}, with probability $1-\delta$ for any peer $l\in[p]$ and any item $i\in[d]$:

\begin{equation}
\left|\tilde{f}_{r,l,i} - \frac{f_i}{p}\right| < \sqrt{(p-1)\sigma_0^2} \sqrt{\frac{C^r }{\delta}}
\end{equation}

The initial distribution $\sigma_0^2$ of the local frequencies of the chosen item over the peers is not known in advance, but the worst case happens when only one peer has the whole quantity $f_i$ and the other $p-1$ peers hold the value $0$. In this case, it follows that  $\sigma_0^2 \leq f_i^2/p$, and hence, with probability $1-\delta$:

\begin{equation}
\left|\tilde{f}_{r,l,i} - \frac{f_i}{p}\right| < f_i\sqrt{\frac{p-1}{p}} \sqrt{\frac{C^r }{\delta}} < f_i \sqrt{\frac{C^r }{\delta}}.
\end{equation}

Considering the definition of $\bar{\epsilon}$ and $\epsilon^*$, it holds that:

\begin{equation}
\frac{f_i}{p} - f_i\bar{\epsilon} < \tilde{f}_{r,l,i} < \frac{f_i}{p} + f_i\bar{\epsilon}
\end{equation}

\noindent that is:

\begin{equation}
\frac{f_i}{p} (1 - \epsilon^*) < \tilde{f}_{r,l,i} < \frac{f_i}{p} (1 + \epsilon^*)
\end{equation}

With a similar reasoning, we can also determine a relationship between the sum of all of the local items' frequencies, for any peer $l$, after the $r$-th round of the algorithm, i.e., $\tilde{n}_{r,l} = |\mathcal{N}_{r,l}|$,  and the sum of all of the items' frequencies in the global stream, $n = |\mathcal{N}|$. With probability $1-\delta$, for any peer $l\in[p]$ and any item $i\in[d]$:

\begin{equation}
\frac{n}{p} (1 - \epsilon^*) < \tilde{n}_{r,l} < \frac{n}{p} (1 + \epsilon^*).
\end{equation}

\subsection{Space-Saving approximation}
At last, let us consider again the invariants of our algorithm: after a round of \textsc{P2PSS}, the summary held by a peer changes and the local stream to which that peer refers changes accordingly so that each peer continues to hold a correct summary for its corresponding portion of the input global stream. This means that each peer's summary $\mathcal{S}_{r,l}$ estimates the frequency of an item in the redistributed local stream $\mathcal{N}_{r,l}$ within the error bounds guaranteed by eqs. \eqref{ssprop1}--\eqref{ssprop4}.
Consequently, denoting by $\hat{f}_{r,l,i}$ the frequency of an item $i$ in $\mathcal{S}_{r,l}$ and by $f_{r,l,i}$ the frequency of that item in $\mathcal{N}_{r,l}$, we have that, for any peer $l\in[p]$ and any item $i\in[d]$:

\begin{equation}
\tilde{f}_{r,l,i} \le \hat{f}_{r,l,i} \le \tilde{f}_{r,l,i} + \frac{\tilde{n}_{r,l}}{k}
\end{equation}

\subsection{Correctness and error bounds}
We shall show here that given a summary $\mathcal{S}_{r,l}$ obtained by any peer $l$ after $r$ rounds of \textsc{P2PSS}, we can select a set of items and their corresponding estimated frequencies solving the Approximate Frequent Items Problem in Unstructured P2P Networks stated in Section~\ref{preliminary-defs}. We shall also determine the error bounds on frequencies' estimation and the relation among the number $k$ of counters to be used by each node and the number $r$ of rounds to be executed in order to guarantee the false positives' tolerance requested by the user.  

\begin{thm}
\label{thm_correctness}
	Given an input stream $\mathcal{N}$ of length $n$, distributed among $p$ nodes, a threshold parameter $0~<~\phi~<~1$, and a probability of failure $0 < \delta < 1$, after $r$ rounds of \textsc{P2PSS}, any peer can returns a set $H$ of items and their corresponding estimated frequencies, so that, with probability $1-\delta$:
	\begin{enumerate}
		\item $H$ includes all of the items in $\mathcal{N}$ that have frequency $f > \phi n$;
		\item $H$ does not include any items in $\mathcal{N}$ that have frequency $f \le (\phi - \epsilon) n$;
	\end{enumerate}
with a false positives tolerance $\epsilon = \frac{4\epsilon^* \phi}{(1+\epsilon^*)^2} + \frac{1-\epsilon^*}{k (1+\epsilon^*)}$ which is bonded by the number of counters $k$ used for the summaries and the number of rounds $r$ executed.
\end{thm}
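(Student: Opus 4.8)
The plan is to analyze directly the set $H$ produced by the \textsc{query} procedure, in which an item $i$ stored in a counter of $\mathcal{S}_{r,l}$ is reported exactly when its summary frequency $\hat{f}_{r,l,i}$ exceeds the threshold $t = \phi\tilde{n}_{r,l}\frac{1-\epsilon^*}{1+\epsilon^*}$. The whole argument rests on three families of bounds already established: the gossip-based approximation of the per-item average frequency, $\frac{f_i}{p}(1-\epsilon^*) < \tilde{f}_{r,l,i} < \frac{f_i}{p}(1+\epsilon^*)$; the analogous bound on the aggregate, $\frac{n}{p}(1-\epsilon^*) < \tilde{n}_{r,l} < \frac{n}{p}(1+\epsilon^*)$; and the Space-Saving guarantee $\tilde{f}_{r,l,i} \le \hat{f}_{r,l,i} \le \tilde{f}_{r,l,i} + \frac{\tilde{n}_{r,l}}{k}$, each a consequence of Proposition~\ref{prop1}. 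I would first note that the reported global frequency $\hat{f}_{r,l,i}\,\tilde{p}_{r,l}$ does not enter the membership decision, so both claims reduce to comparing $\hat{f}_{r,l,i}$ against $t$.

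For claim~1 (completeness) I would assume $f_i > \phi n$ and show the counter survives the test. Chaining the lower Space-Saving bound with the lower gossip bound gives $\hat{f}_{r,l,i} \ge \tilde{f}_{r,l,i} > \frac{f_i}{p}(1-\epsilon^*) > \frac{\phi n}{p}(1-\epsilon^*)$, while the upper bound on $\tilde{n}_{r,l}$ yields $t < \phi\frac{n}{p}(1+\epsilon^*)\frac{1-\epsilon^*}{1+\epsilon^*} = \phi\frac{n}{p}(1-\epsilon^*)$. Comparing the two shows $\hat{f}_{r,l,i} > t$, so $i$ is placed in $H$; the cancellation of the factor $(1+\epsilon^*)$ hidden in $t$ is precisely what makes the two sides meet.

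For claim~2 I would argue the contrapositive: any reported item, i.e. any $i$ with $\hat{f}_{r,l,i} > t$, satisfies $f_i > (\phi-\epsilon)n$. From $\hat{f}_{r,l,i} \le \tilde{f}_{r,l,i} + \frac{\tilde{n}_{r,l}}{k}$ and the upper gossip bound $\tilde{f}_{r,l,i} < \frac{f_i}{p}(1+\epsilon^*)$ I obtain $t - \frac{\tilde{n}_{r,l}}{k} < \frac{f_i}{p}(1+\epsilon^*)$, hence $f_i > \frac{p}{1+\epsilon^*}\left(t - \frac{\tilde{n}_{r,l}}{k}\right)$. The delicate step is to keep the threshold and the Space-Saving slack together: substituting $t$ and factoring gives $t - \frac{\tilde{n}_{r,l}}{k} = \tilde{n}_{r,l}\left(\phi\frac{1-\epsilon^*}{1+\epsilon^*} - \frac{1}{k}\right)$, and only then applying the single lower bound $\tilde{n}_{r,l} > \frac{n}{p}(1-\epsilon^*)$ to the entire parenthesised expression produces $f_i > n\left(\phi\frac{(1-\epsilon^*)^2}{(1+\epsilon^*)^2} - \frac{1-\epsilon^*}{k(1+\epsilon^*)}\right)$. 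Identifying the right-hand side with $(\phi-\epsilon)n$ and using $1 - \frac{(1-\epsilon^*)^2}{(1+\epsilon^*)^2} = \frac{4\epsilon^*}{(1+\epsilon^*)^2}$ yields exactly $\epsilon = \frac{4\epsilon^*\phi}{(1+\epsilon^*)^2} + \frac{1-\epsilon^*}{k(1+\epsilon^*)}$.

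The main obstacle, and the only genuinely nontrivial point, is this last bookkeeping: if one bounded the threshold term and the Space-Saving error term separately, using the lower bound on $\tilde{n}_{r,l}$ for $t$ but the upper bound for $\frac{\tilde{n}_{r,l}}{k}$, the tolerance would degrade to $\frac{4\epsilon^*\phi}{(1+\epsilon^*)^2} + \frac{1}{k}$ instead of the sharper stated term, so retaining the common factor $\tilde{n}_{r,l}$ before estimating is essential; I would also record the mild positivity condition $\phi\frac{1-\epsilon^*}{1+\epsilon^*} > \frac{1}{k}$ (equivalently, $k$ large enough that $\epsilon < \phi$) under which the factoring preserves the inequality direction. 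A secondary point is that all three bound families, together with $\frac{p}{1+\epsilon^*} < \tilde{p}_{r,l} < \frac{p}{1-\epsilon^*}$ used to rescale frequencies, are invoked at once; I would state explicitly that they are taken to hold jointly with probability $1-\delta$, each following from the same variance-reduction estimate of Proposition~\ref{prop1}, so the final guarantee carries the claimed confidence. Finally, since $\epsilon$ is increasing in $\epsilon^*$ and in $1/k$, and $\epsilon^* = p^*\sqrt{C^r/\delta}$ decays with $r$, the closing remark that the tolerance is controlled jointly by $k$ and $r$ follows immediately.
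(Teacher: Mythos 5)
Your proposal is correct and follows essentially the same route as the paper's own proof: the same four probabilistic bound families, the same selection criterion $\hat{f}_{r,l,i} > \phi\tilde{n}_{r,l}\frac{1-\epsilon^*}{1+\epsilon^*}$, completeness by chaining the lower bounds, soundness by the contrapositive, and the identity $1-\left(\frac{1-\epsilon^*}{1+\epsilon^*}\right)^2 = \frac{4\epsilon^*}{(1+\epsilon^*)^2}$. The only difference is cosmetic: the paper divides through by $\tilde{n}_{r,l}$ and bounds the ratio $\tilde{f}_{r,l,i}/\tilde{n}_{r,l}$ (which sidesteps the positivity caveat you rightly flag when factoring $\tilde{n}_{r,l}$ out of $t - \tilde{n}_{r,l}/k$), arriving at the same sharpened tolerance term $\frac{1-\epsilon^*}{k(1+\epsilon^*)}$.
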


\begin{proof}
We first recap the main relations we proved above, valid with probability $1-\delta$, for all the items $i$ in the summary $\mathcal{S}_{r,l}$ and any given peer $l$, after round $r$:

\begin{align}
\label{eq_properties1}
%\begin{split}
\frac{p}{1+\epsilon^*} < & \tilde{p}_{r,l} < \frac{p}{1-\epsilon^*}; \\
\label{eq_properties2}
\frac{f_i}{p} (1 - \epsilon^*) < & \tilde{f}_{r,l,i} < \frac{f_i}{p} (1 + \epsilon^*); \\
\label{eq_properties3}
\frac{n}{p} (1 - \epsilon^*) < & \tilde{n}_{r,l} < \frac{n}{p} (1 + \epsilon^*); \\
\label{eq_properties4}
\tilde{f}_{r,l,i} \le & \hat{f}_{r,l,i} \le \tilde{f}_{r,l,i} + \frac{\tilde{n}_{r,l}}{k};
%\end{split}
\end{align}

We need to select all of the items whose global frequency $f_i$ is greater than the threshold $\phi n$. From the relations \eqref{eq_properties1}--\eqref{eq_properties4}, we can derive the following:

\begin{equation}
\hat{f}_{r,l,i} \frac{p}{1-\epsilon^*} > \tilde{f}_{r,l,i} \frac{p}{1-\epsilon^*} > f_i > \phi n > \phi \tilde{n}_{r,l} \frac{p}{1+\epsilon^*}
\end{equation}

Thus, we do not need to output all of the items in the summary $\mathcal{S}_{r,l}$, but only those ones which have an estimated frequency respecting the following relation:

\begin{equation}
\hat{f}_{r,l,i} > \phi \tilde{n}_{r,l} \frac{1-\epsilon^*}{1+\epsilon^*}
\end{equation}

In order to compute the error, in terms of false positives' tolerance, that we commit with this selection criterion, we can use again eqs. \eqref{eq_properties1}--\eqref{eq_properties4} and prove that if $\hat{f}_{r,l,i} > \phi \tilde{n}_{r,l} \frac{1-\epsilon^*}{1+\epsilon^*}$, then, with probability $1-\delta$:

\begin{equation}
f_i > \left\{\phi - \left[\frac{4\epsilon^* \phi}{(1+\epsilon^*)^2} + \frac{1}{k}\right]\right\} n.
\end{equation}

\noindent In fact:

% > \phi \frac{n(1-\epsilon^*)^2}{p (1+\epsilon^*)}

\begin{equation}
\begin{split}
\tilde{f}_{r,l,i} + \frac{\tilde{n}_{r,l}}{k} > \hat{f}_{r,l,i} > \phi \tilde{n}_{r,l} \frac{1-\epsilon^*}{1+\epsilon^*}  \implies \\
\frac{\tilde{f}_{r,l,i}}{\tilde{n}_{r,l}} + \frac{1}{k} > \phi \frac{1-\epsilon^*}{1+\epsilon^*}  \implies \\
\frac{f_i(1+\epsilon^*)}{n(1-\epsilon^*)} + \frac{1}{k}> \phi \frac{1-\epsilon^*}{1+\epsilon^*} \implies \\
f_i > \phi n \left(\frac{1-\epsilon^*}{1+\epsilon^*}\right)^2 - \frac{n(1-\epsilon^*)}{k(1+\epsilon^*)}\implies \\
f_i > \left\{\phi - \left[1 - \left(\frac{1-\epsilon^*}{1+\epsilon^*}\right)^2\right] \phi + \frac{1-\epsilon^*}{k(1+\epsilon^*)}\right\}  n \implies \\
f_i > \left\{\phi - \left[\frac{4\epsilon^* \phi}{(1+\epsilon^*)^2} + \frac{1-\epsilon^*}{k(1+\epsilon^*)}\right]\right\} n
\end{split}
\end{equation}

Thus, we can conclude that, with reference to the problem definition, with probability $1-\delta$, no items with frequency $f_i \le (\phi - \epsilon) n$ shall be reported in $H$, with $\epsilon = \frac{4\epsilon^* \phi}{(1+\epsilon^*)^2} + \frac{1-\epsilon^*}{k (1+\epsilon^*)}$.
\end{proof}

\subsection{Frequency estimation error bounds}
The frequency estimations in $\mathcal{S}_{r,l}$ are referred to average frequencies. Thus, in order to obtain an estimation of the global frequency $f_i$ of an item $i$, we need to multiply $\hat{f}_{r,l,i}$ by $\tilde{p}_{r,l}$.
From eqs. \eqref{eq_properties1}--\eqref{eq_properties4} we can compute the error bounds of this estimation. The following theorem holds.

\begin{thm}
Given an input stream $\mathcal{N}$ of length $n$, distributed among $p$ nodes and a probability of failure $0< \delta < 1$, after $r$ rounds of \textsc{P2PSS}, any peer can report a frequency estimation $f^s_{r,l,i}$ of an item $i \in [m]$ so that, with probability $1-\delta$:

	\begin{equation}
		\label{est_error}
		\frac{1-\epsilon^*}{1+\epsilon^*} f_i < f^s_{r,l,i} < \frac{1+\epsilon^*}{1-\epsilon^*} \left(f_i + \frac{n}{k}\right).
	\end{equation}
\end{thm}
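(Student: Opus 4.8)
The plan is to recognize that the reported estimate is simply $f^s_{r,l,i} = \hat{f}_{r,l,i}\,\tilde{p}_{r,l}$, the stored average frequency rescaled by the peer's estimate of the number of peers exactly as prescribed by the \textsc{query} procedure, and then to bound this product from both sides by substituting the four already-established relations \eqref{eq_properties1}--\eqref{eq_properties4}. Since every factor involved is nonnegative (recall $0<\epsilon^*<1$), each bound reduces to a monotone chaining of inequalities, and no fresh probabilistic argument is needed: \eqref{eq_properties1}--\eqref{eq_properties4} already hold jointly with probability $1-\delta$, and multiplying/adding them preserves that same confidence event.

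For the lower bound I would first discard the additive Space-Saving term in \eqref{eq_properties4}, using only $\hat{f}_{r,l,i}\ge\tilde{f}_{r,l,i}$, then replace $\tilde{f}_{r,l,i}$ by its lower bound $\tfrac{f_i}{p}(1-\epsilon^*)$ from \eqref{eq_properties2} and $\tilde{p}_{r,l}$ by its lower bound $\tfrac{p}{1+\epsilon^*}$ from \eqref{eq_properties1}. The factors of $p$ cancel and yield $f^s_{r,l,i} > \tfrac{1-\epsilon^*}{1+\epsilon^*}\,f_i$, which is the left inequality of \eqref{est_error}.

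For the upper bound I would instead keep the full right-hand side of \eqref{eq_properties4}, bound $\tilde{f}_{r,l,i}$ above by $\tfrac{f_i}{p}(1+\epsilon^*)$ via \eqref{eq_properties2} and $\tilde{n}_{r,l}$ above by $\tfrac{n}{p}(1+\epsilon^*)$ via \eqref{eq_properties3}, collecting them into $\hat{f}_{r,l,i} < \tfrac{1+\epsilon^*}{p}\bigl(f_i+\tfrac{n}{k}\bigr)$. Multiplying by the upper bound $\tilde{p}_{r,l} < \tfrac{p}{1-\epsilon^*}$ from \eqref{eq_properties1} once more cancels the $p$ and gives the claimed $\tfrac{1+\epsilon^*}{1-\epsilon^*}\bigl(f_i+\tfrac{n}{k}\bigr)$, completing \eqref{est_error}.

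The computation itself is routine, so I do not expect a genuine obstacle; the only point requiring care is bookkeeping of inequality directions, namely pairing each estimated quantity with the correct one-sided bound (lower with lower, upper with upper) and checking positivity so that multiplying the two estimated factors preserves the inequality. A secondary subtlety, which the paper handles implicitly, is that \eqref{eq_properties1}--\eqref{eq_properties4} must be treated as holding \emph{simultaneously}; I would justify this by regarding them as consequences of the same $1-\delta$ confidence event of Proposition~\ref{prop1} applied to the parallel averaging processes that produce $\tilde{p}_{r,l}$, $\tilde{f}_{r,l,i}$ and $\tilde{n}_{r,l}$.
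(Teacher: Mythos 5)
Your proposal is correct and follows essentially the same route as the paper's proof: both identify $f^s_{r,l,i} = \hat{f}_{r,l,i}\,\tilde{p}_{r,l}$ and then sandwich this product by multiplying the Space-Saving relation \eqref{eq_properties4} by $\tilde{p}_{r,l}$ and chaining the matching one-sided bounds from \eqref{eq_properties1}--\eqref{eq_properties3}, with positivity guaranteeing that the inequality directions are preserved. The only cosmetic difference is bookkeeping order (the paper carries the ratio $\tilde{p}_{r,l}/p$ symbolically and cancels it at the end, whereas you substitute the bounds on $\tilde{p}_{r,l}$ directly), and your explicit remark that \eqref{eq_properties1}--\eqref{eq_properties4} hold on a single $1-\delta$ confidence event is a point the paper leaves implicit.
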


\begin{proof}
	From eq. \eqref{eq_properties1} we have that:
	\begin{equation}
	 	\label{eq_properties5}
		 \frac{1}{1+\epsilon^*} < \frac{\tilde{p}_{r,l}}{p} < \frac{1}{1-\epsilon^*} 
	\end{equation}
\noindent and from eq. \eqref{eq_properties2} and eq. \eqref{eq_properties3}, we have that:
	\begin{equation}
	   \label{eq_properties6}
		\begin{split}
			f_i \frac{\tilde{p}_{r,l}}{p} (1 - \epsilon^*) < \tilde{f}_{r,l,i}  \tilde{p}_{r,l} < f_i \frac{\tilde{p}_{r,l}}{p} (1 + \epsilon^*), \\
			n \frac{\tilde{p}_{r,l}}{p} (1 - \epsilon^*) < \tilde{n}_{r,l} \tilde{p}_{r,l} < n \frac{\tilde{p}_{r,l}}{p} (1 + \epsilon^*). \\
		\end{split}
	\end{equation}
\noindent Now, starting from eq. \eqref{eq_properties4} and taking into account eq. \eqref{eq_properties5} and eq. \eqref{eq_properties6}, it follows that:
	\begin{equation}
		\begin{split}
			\tilde{f}_{r,l,i} \tilde{p}_{r,l} \le &\hat{f}_{r,l,i} \tilde{p}_{r,l} \le \tilde{f}_{r,l,i} \tilde{p}_{r,l} + \frac{\tilde{n}_{r,l}}{k} \tilde{p}_{r,l} \implies \\
			f_i \frac{\tilde{p}_{r,l}}{p} (1 - \epsilon^*) < &\hat{f}_{r,l,i} \tilde{p}_{r,l} < \left(f_i + \frac{n}{k}\right) \frac{\tilde{p}_{r,l}}{p} (1 + \epsilon^*) \implies \\
				\frac{1-\epsilon^*}{1+\epsilon^*} f_i < &\hat{f}_{r,l,i} \tilde{p}_{r,l}  < \frac{1+\epsilon^*}{1-\epsilon^*} \left(f_i + \frac{n}{k}\right).
		\end{split}
	\end{equation}
\noindent and eventually, setting $f^s_{r,l,i} = \hat{f}_{r,l,i} \tilde{p}_{r,l}$, the relation \eqref{est_error} follows.
\end{proof}

\subsection{Practical considerations}

We conclude this Section discussing how to select proper values for the parameters $k$ and $R$, which represent respectively the number of counters to be used for the Space-Saving stream summary data structure and the minimum number of rounds required to solve the Approximate Frequent items Problem in Unstructured P2P Networks. Theorem~\ref{thm_correctness} proves the correctness of the algorithm providing also a theoretical guarantee about the bound $\epsilon$ on the number of false positives items. The user can increase the number of rounds $R$ and/or increase the number of Space-Saving counters $k$ to reduce the  false positives tolerance $\epsilon$. Fixing a given tolerance $\epsilon$, the user has one degree of freedom to achieve it; Figure~\ref{plot_k_r} plots the relationship between the values for $R$ and $k$ which produce a given tolerance $\epsilon$. The relationship between $k$ and $R$ is given by eq. \eqref{eq_k_r}:

\begin{equation}
 \label{eq_k_r}
k=\frac{1-\epsilon^{*^2}}{\epsilon \left(1+\epsilon^*\right)^2-4 \phi \epsilon^*}=\frac{1-p^{*^2} \frac{C^R}{\delta}}{\epsilon \left(1+p^* \sqrt{\frac{C^R}{\delta}}\right)^2-4 \phi p^* \sqrt{\frac{C^R}{\delta}}}
\end{equation}

\begin{figure}[]
  \centering
	\includegraphics[width=0.45\textwidth]{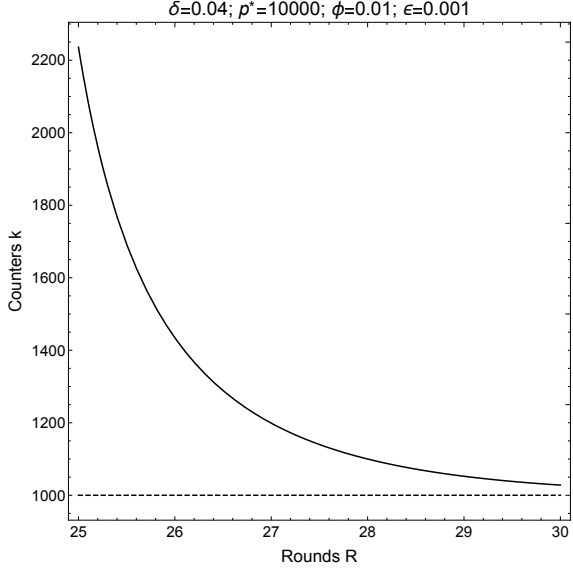}
 	\caption{Relationship between the number of counters and the number of rounds to guarantee a given level of false positive tolerance $\epsilon$.} 
  	\label{plot_k_r}
\end{figure}

Among all of the possible values for $R$ and $k$, the user could follow a strategy oriented to maintain the number of rounds (hence the time) as fewer as possibile and to choose $k$ accordingly or viceversa to maintain the number of counters (hence the space) as lower as possibile and to choose $R$ accordingly. Let us now discuss both strategies.

With the first strategy, which can be called \textit{time-dominant}, the user is interested on choosing $R$ and $k$ which guarantee a given $\epsilon$ such that $R$ is minimum. The eq. \eqref{eq_k_r} reveals that $R$ is a monotone decreasing function with $k$, hence the minimum value for $R$ is obtained when $k$ tends to infinity; moreover, it holds that $k>0$ hence the minimum value for $R$ can be calculated by imposing the following constraint:
 
\begin{equation}
\epsilon \left(p^* \sqrt{\frac{C^R}{\delta}}+1\right)^2-4 \phi p^* \sqrt{\frac{C^R}{\delta}} > 0
\end{equation}

\noindent from which it follows that

\begin{equation}
R> \frac{ \log {\delta}+2 \log \left( \frac{2 \phi -\epsilon - 2 \sqrt{\phi^2-\epsilon \phi}}{\epsilon p^*} \right)}{\log C}
\end{equation}

\noindent Since $R$ is an integer, the minimum value of $R$ is given by:

\begin{equation}
\label{eq_min_r}
R_{min}=\left\lfloor \frac{ \log {\delta}+2 \log \left( \frac{2 \phi -\epsilon - 2 \sqrt{\phi^2-\epsilon \phi}}{\epsilon p^*} \right)}{\log C}  \right\rfloor +1
\end{equation}

\noindent Substituting the vale of $R_{min}$ provided by eq. \eqref{eq_min_r} into eq. \eqref{eq_k_r} for $R$, it is possible to obtain the value for $k$.

With the second strategy, which can be called \textit{space-dominant}, the user is interested to keeping the memory footprint as lower as possibile. The eq. \eqref{eq_k_r} reveals that $k$ is a monotone decreasing function with $R$ hence the minimum value for $k$ is obtained when $R$ tends to infinity. Evaluating eq. \eqref{eq_k_r} for $R \rightarrow \infty$ it holds that the minimum value for $k$ is given by:

\begin{equation}
k>\frac{1}{\epsilon}.
\end{equation}

\noindent Considering that $k$ is an integer value

\begin{equation}
\label{eq_min_k}
k_{min}= \left \lfloor \frac{1}{\epsilon} \right \rfloor + 1
\end{equation}

\noindent solving eq. \eqref{eq_k_r} by $\epsilon^*$ and using eq. \eqref{eq_min_k} it holds that:

\begin{equation}
\epsilon^* = \frac{k_{min}(2\phi -\epsilon) - \sqrt{4 \phi k^2_{min} (\phi -\epsilon)+ 1}}{1 + \epsilon k_{min}}.
\end{equation}

\noindent Since $\epsilon^* = p^* \sqrt{\frac{C^R}{\delta}}$, it holds that:

\begin{equation}
\label{eq_r_min_k}
R = \frac{1}{\log C} \left( 2\log \epsilon^* - 2 \log p^* + \log \delta \right).
\end{equation}

Since $R$ is an integer, 

\begin{equation}
\label{eq_r_min_k}
R = \left \lfloor \frac{1}{\log C} \left( 2\log \epsilon^* - 2 \log p^* + \log \delta \right) \right \rfloor + 1.
\end{equation}

\section{Experimental results}
\label{results}
In order to  evaluate our \textsc{P2PSS} algorithm we have implemented a simulator in C++ using the igraph library \cite{libigraph}, and carried out a series of experiments. The simulator has been compiled using the GNU C++ compiler g++ 4.8.5 on CentOS Linux 7. The tests have been performed on a machine equipped with two hexa-core Intel Xeon-E5 2620 CPUs at 2.0 GHz and 64 GB of main memory.
The source code of the simulator is freely available for inspection and for reproducibility of results contacting the authors by email. 

In every experiment, a global input stream of items has been generated (items are 32 bits unsigned integers, but the source code implementing the algorithm can be easily modified in order to process different types of items) following a Zipfian distribution and each peer has been assigned a distinct part of that global stream, thus simulating the scenario in which each peer processes, independently of the other peers, its own local sub-stream, and the peers collaboratively discover the frequent items in the union of their sub-streams. The experiments have been repeated 10 times setting each time a different seed for the pseudo-random number generator used for creating the input data. For each experiment execution, we collected the peers' statistics relevant for the evaluation of the algorithm (more details in the following). Then, with reference to each peer, we determined the average value of those statistics over the ten executions. At last, we computed the mean and confidence interval for each statistics over all of the peers and plotted this values.

We fixed the number of elements in the global stream at $200$ millions, and varied the skew of the Zipfian distribution, $\rho$, the number of peers, $p$, the frequent items threshold, $\phi$, the number of counters used by each peer $k$ or the fan-out $fo$, setting non varying parameters to the default values. Every experiment has been carried out by generating random P2P network topologies through the Barabasi-Albert and Erdos-Renyi random graphs models. Table~\ref{experiments} reports the sets of values (first row) and default values (second row) used for the parameters.

\begin{table*}
	\renewcommand{\arraystretch}{1.3}
	\caption{Experiment values}
	\label{experiments}
	\centering
	\small
	\begin{tabular}{| c | c | c | c | c | c |}
		\hline
		$\boldsymbol{\rho}$ &  $\boldsymbol{\phi}$ & $\boldsymbol{p\  (\times10^3)}$  & $\boldsymbol{k\  (\times10^3)}$ & $\boldsymbol{r}$ & $\boldsymbol{fo}$\\
		\hline
		 \{0.9, 1.1, 1.3, 1.5\} & \{0.01, 0.02, 0.03, 0.04\}  & \{1, 5, 10, 15, 20\} &  \{1, 1.8, 2.6, 3.4\} &  \{20, 22, 24, 26, 28\} & \{1, 2, 3, ALL\}\\ 
		\hline
		1,2 & 0.02 & 10 &  2.2 &  24 &  1\\ \hline
	\end{tabular}
\end{table*}

The metrics computed are the \emph{Recall}, the \emph{Precision}, and the \emph{Average Relative Error} on frequency estimation with reference to the set of frequent items candidates reported in output. Recall is defined as the fraction of frequent items retrieved by an algorithm over the total number of frequent items.  Precision is the fraction of frequent items retrieved over the total number of items reported as frequent items candidates. Relative Error is defined as usual as $\frac{\left| f^S-f \right|}{f}$, where $f^S$ is the frequency reported for an item and $f$ is its true frequency. 

\begin{figure*}[h]
	\centering
	\begin{tabular}{ccc}		
		\subfloat[Recall]{
			\includegraphics[width=0.3\textwidth]{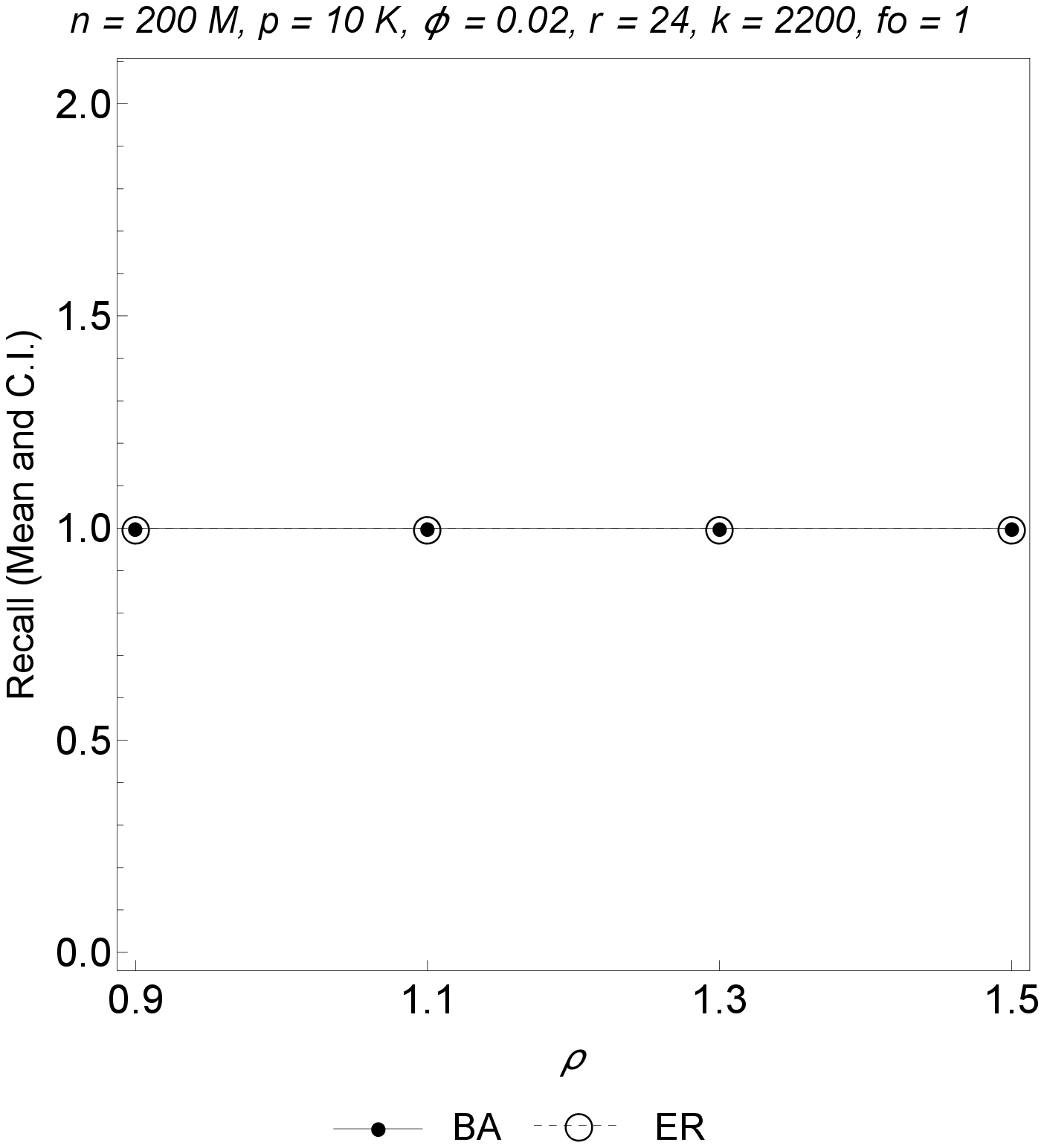}
			\label{sk-rec}
		} &
		
		\subfloat[Precision]{
			\includegraphics[width=0.3\textwidth]{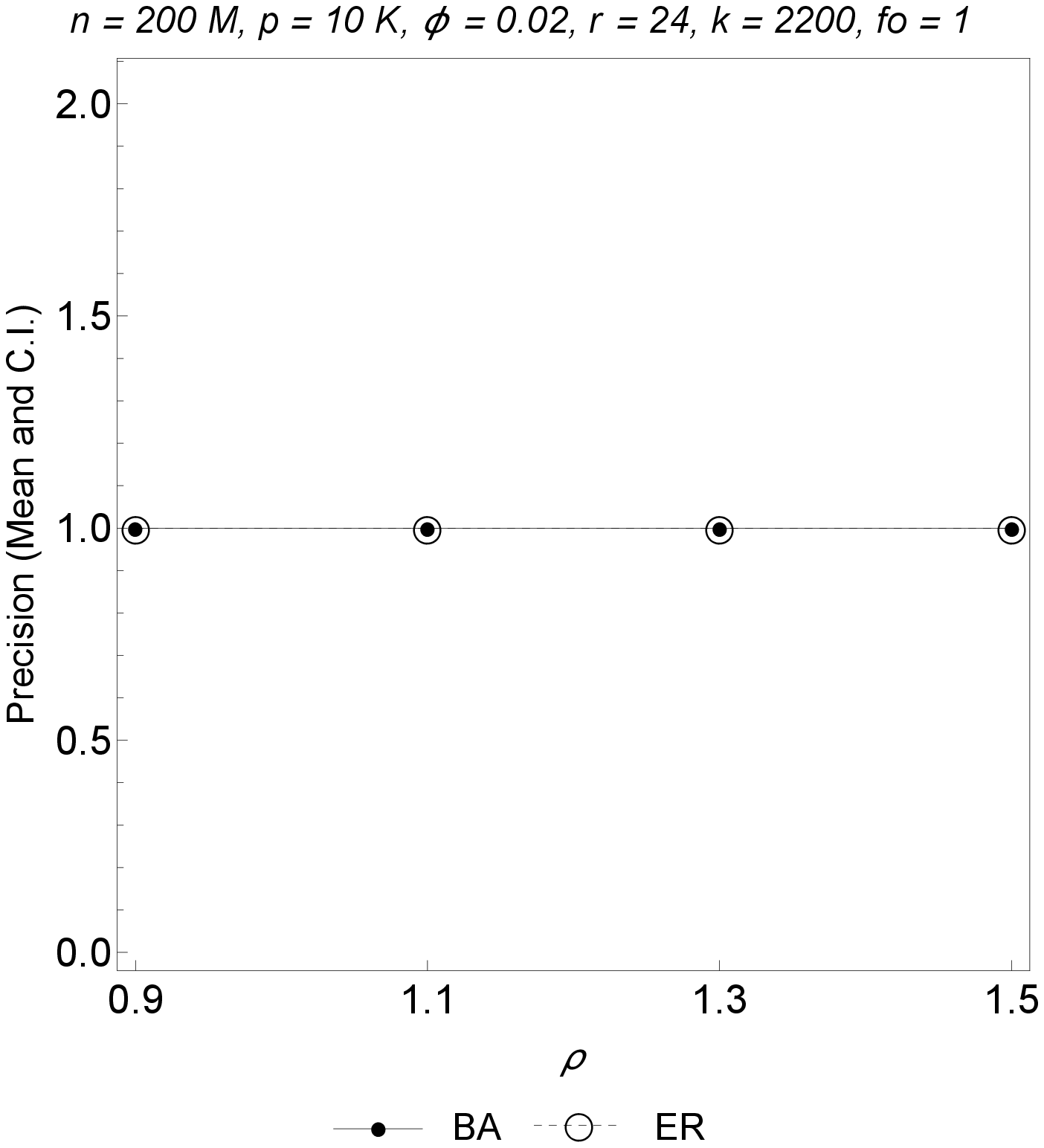}
			\label{sk-prec}
		} &
	  	
			\subfloat[Average Relative Error ]{
			\includegraphics[width=0.3\textwidth]{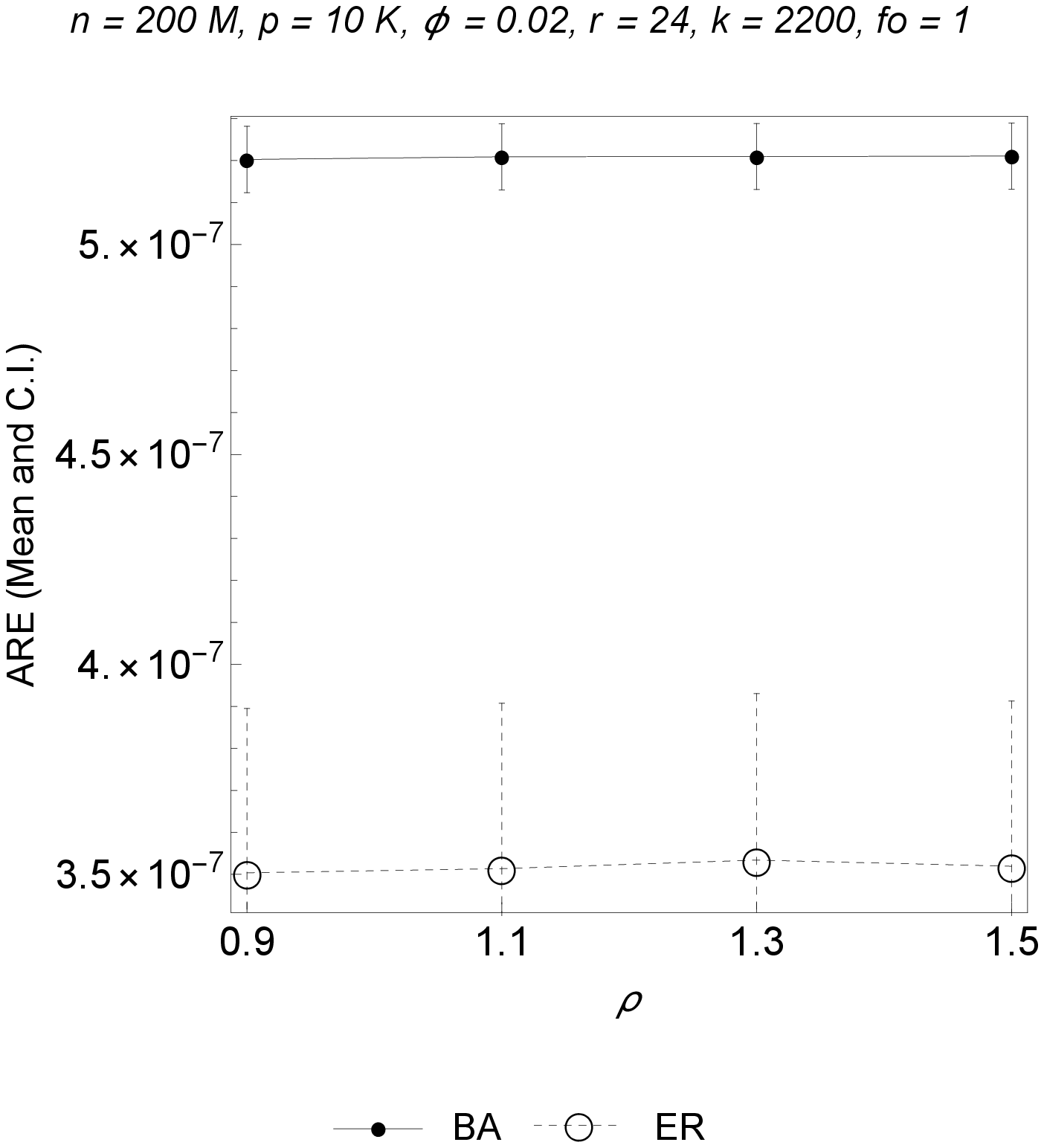}
			\label{sk-are}
		} 
	\end{tabular}
	
	\caption{Recall, Precision and Average Relative Error (mean and confidence interval) varying the skewness of the input distribution, for both a Barabasi-Albert (BA) and an Erdos-Renyi (ER) type of network graph.} 
	\label{skew_plot}
\end{figure*}

Figure~\ref{skew_plot} reports the Recall (Fig.~\ref{sk-rec}), the Precision (Fig.~\ref{sk-prec}) and the Average Relative Error (Fig.~\ref{sk-are}) varying the skewness of the Zipfian disribution from which the input items are drawn. Recall and Precision are always $100\%$, showing that the algorithm is robust enough with regard to skewness variations in the input. Moreover, Average Relative Errors on frequency estimation are very low, and in particular we note that an increase in the fan-out from 1 to 2 improves the accuracy of estimation.

\begin{figure*}[h]
	\centering
	\begin{tabular}{ccc}		
		\subfloat[Recall]{
			\includegraphics[width=0.3\textwidth]{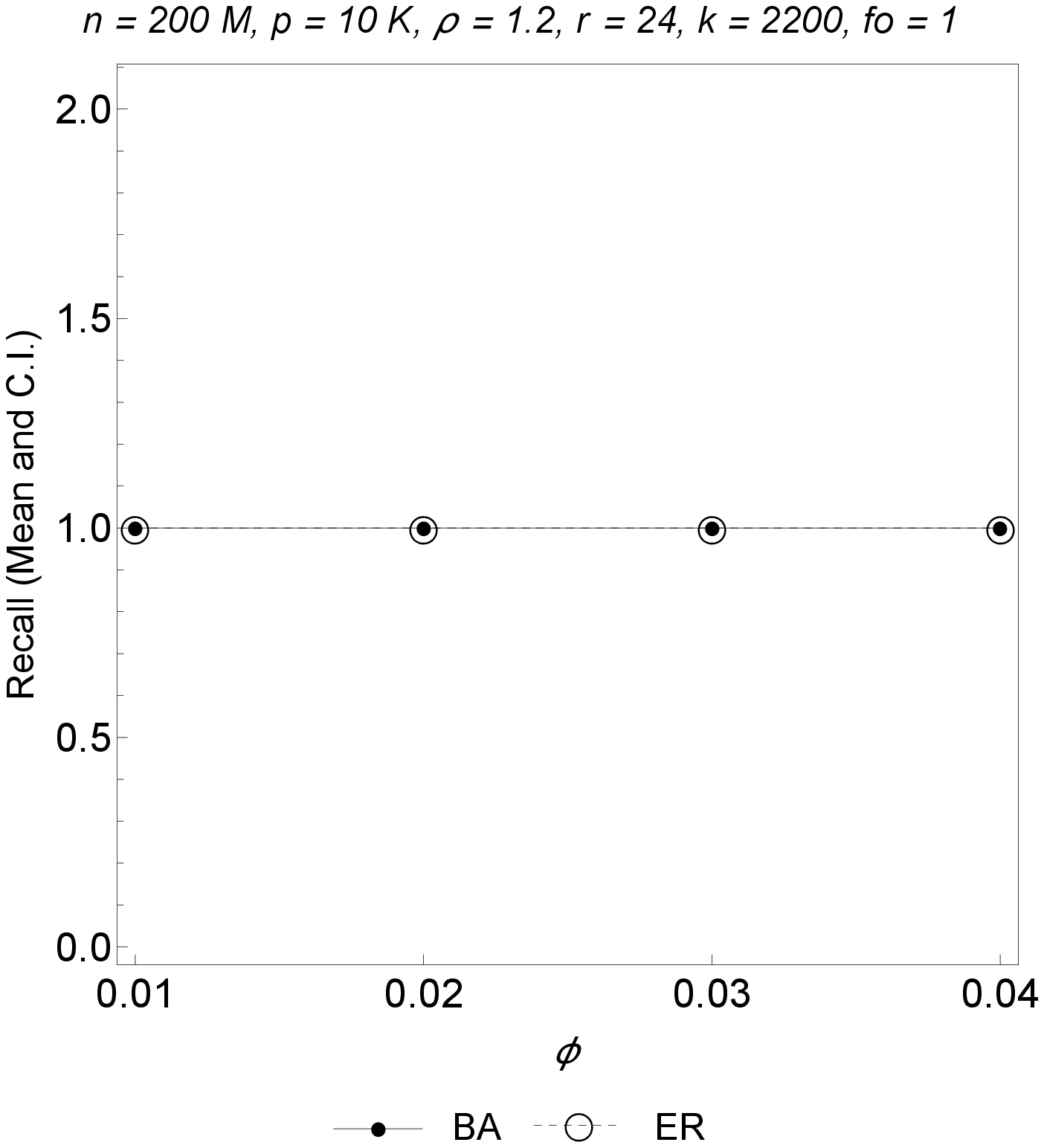}
			\label{phi-rec}
		} &
		
		\subfloat[Precision]{
			\includegraphics[width=0.3\textwidth]{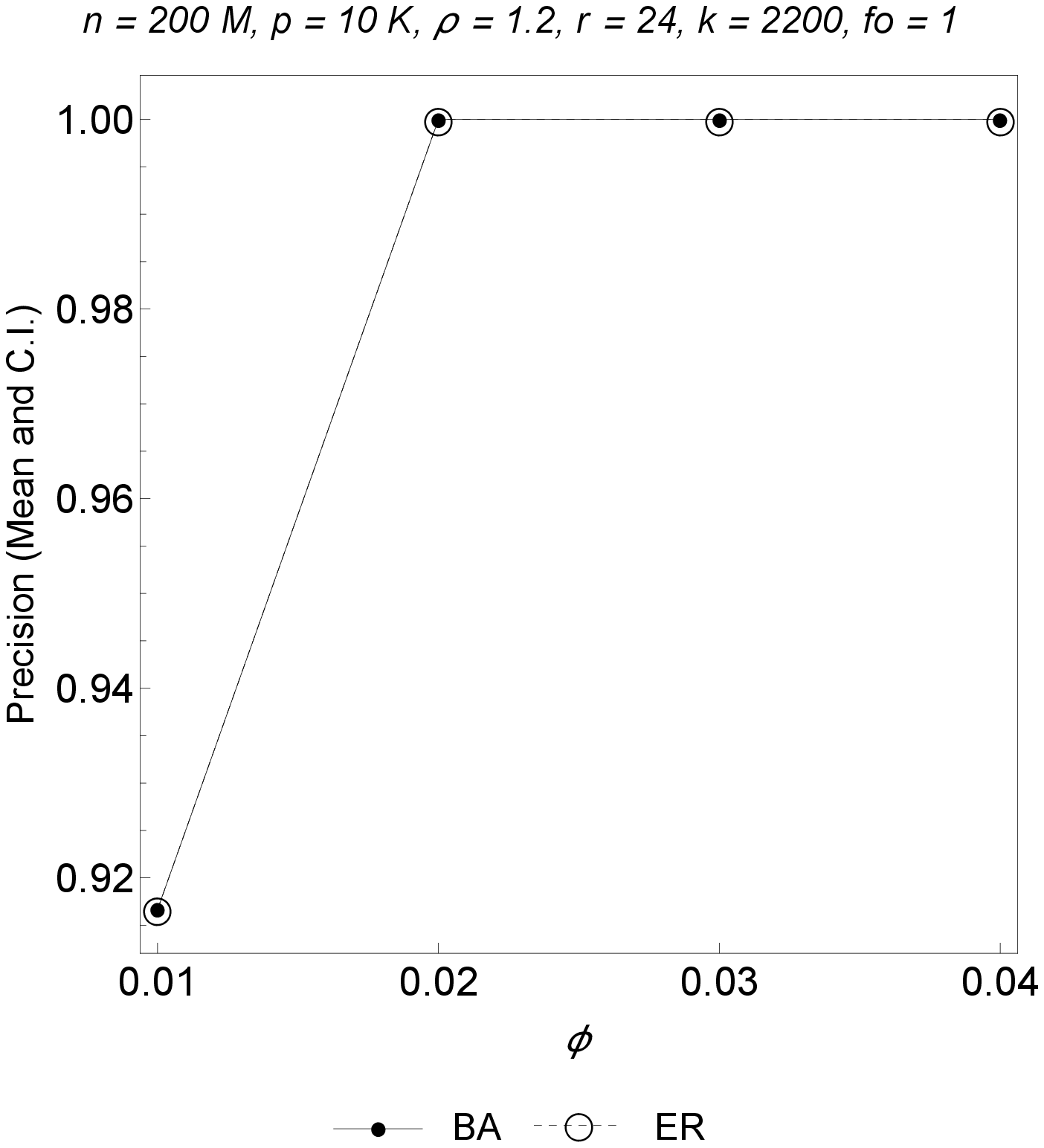}
			\label{phi-prec}
		} &
		
			\subfloat[Average Relative Error ]{
				\includegraphics[width=0.3\textwidth]{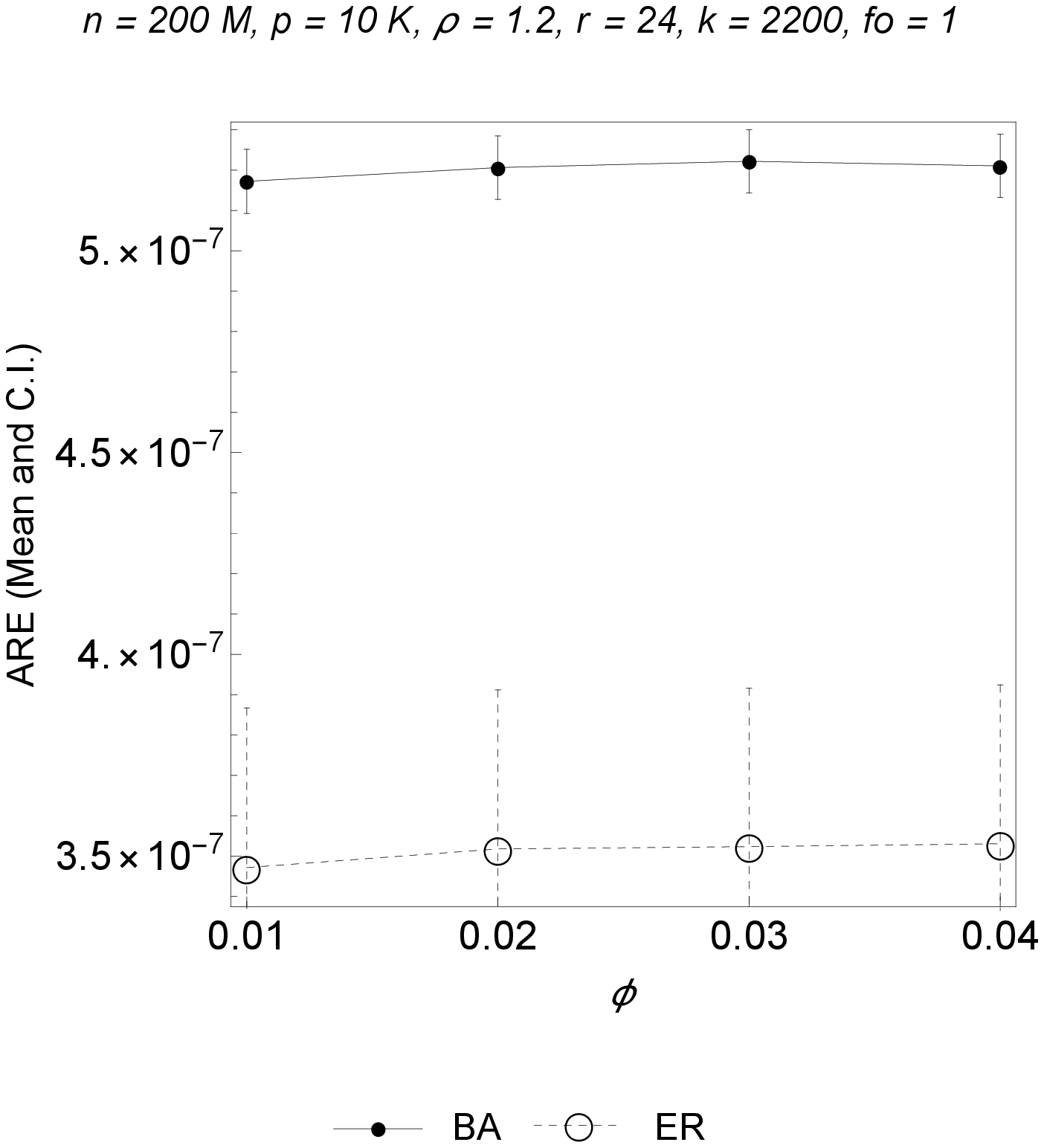}
				\label{phi-are}
			} 
		
	\end{tabular}
	
	\caption{Recall, Precision and Average Relative Error (mean and confidence interval) varying the frequent items threshold $\phi$, for both a Barabasi-Albert (BA) and an Erdos-Renyi (ER) type of network graph.} 
	\label{phi_plot}
\end{figure*}

Figure~\ref{phi_plot} shows how  \textsc{P2PSS} behaves with regard  to variations of the threshold $\phi$. The figure confirms a good performance of the algorithm: Recall is always $100\%$ as well as the Precision, except for a slightly lower value for $\phi = 0.01$. Average Relative Errors are at the same levels as for the skewness plots.

\begin{figure*}[h]
	\centering
	\begin{tabular}{ccc}		
		\subfloat[Recall]{
			\includegraphics[width=0.285\textwidth]{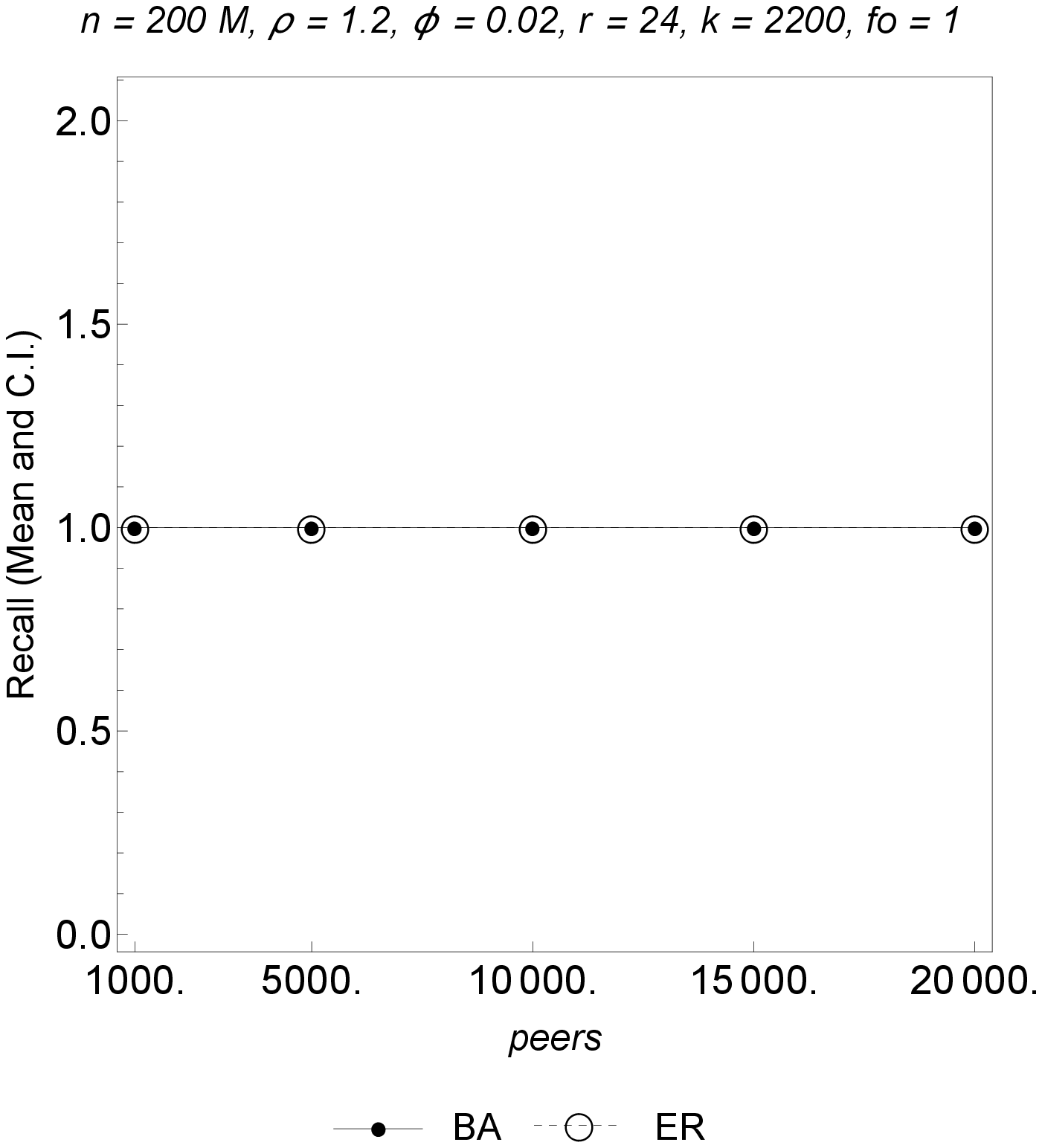}
			\label{p-rec}
		} &
		
		\subfloat[Precision]{
			\includegraphics[width=0.285\textwidth]{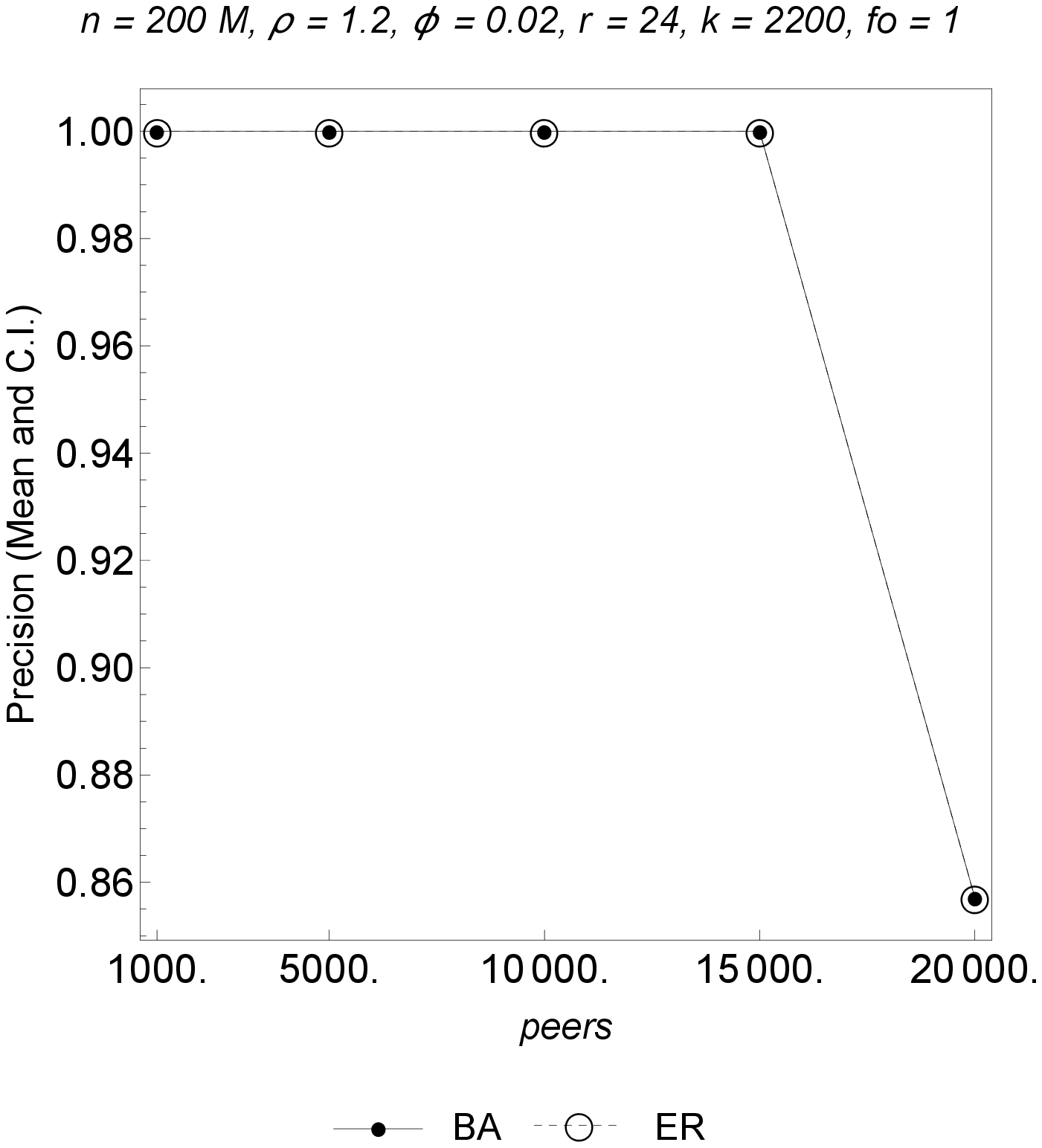}
			\label{p-prec}
		} &
		
			\subfloat[Average Relative Error ]{
				\includegraphics[width=0.33\textwidth]{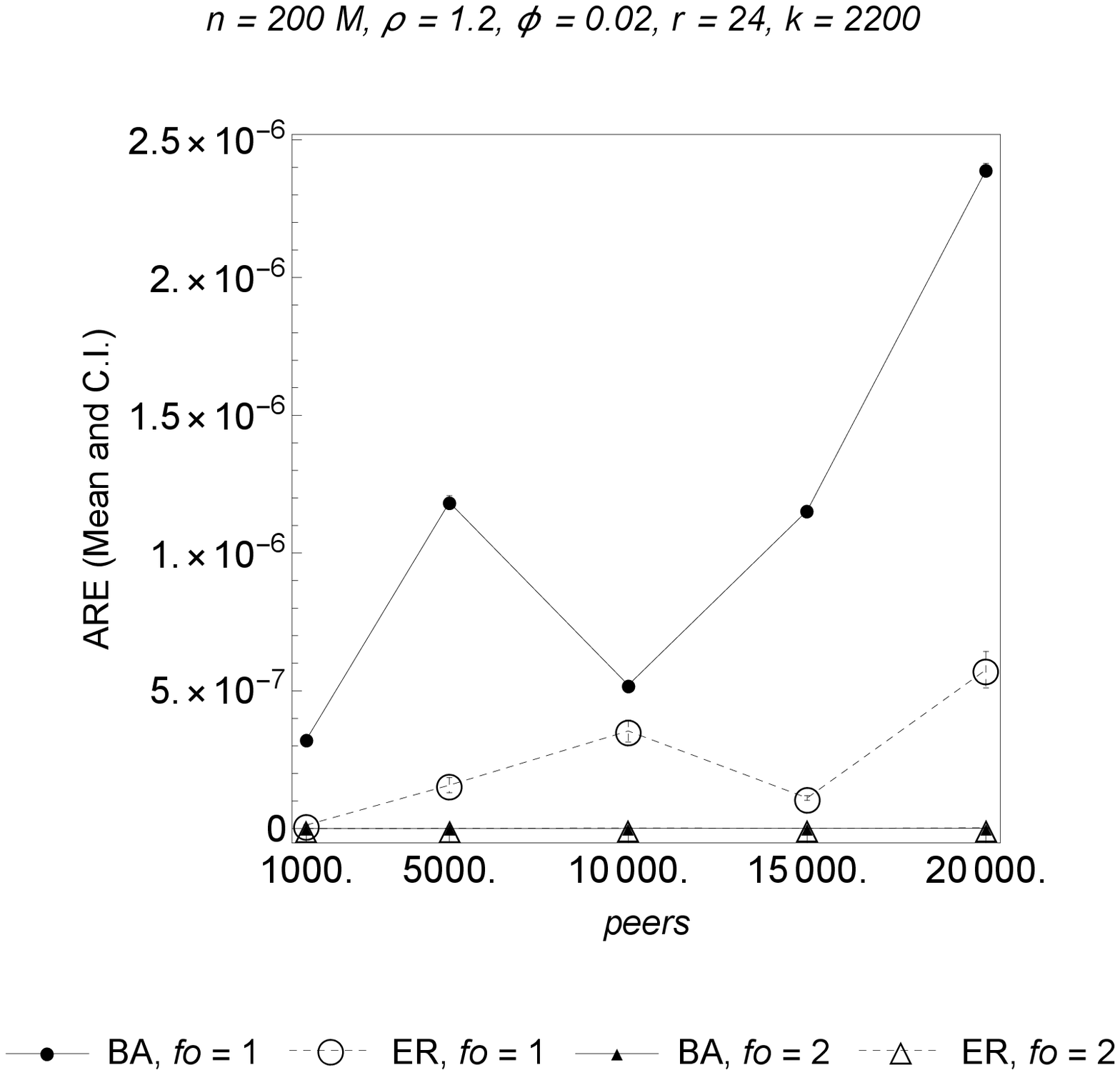}
				\label{p-are}
			}
	
	\end{tabular}
	
	\caption{Recall, Precision and Average Relative Error (mean and confidence interval) varying the number of peers participating in the computation,  for both a Barabasi-Albert (BA) and an Erdos-Renyi (ER) type of network graph, and setting a fan-out $fo$ equal to 1 and 2 in case of the ARE plot.} 
	\label{peers_plot}
\end{figure*}

Figure~\ref{peers_plot} depicts the trend for Recall, Precision and Average Relative Error with regard to the experiments where we varied the number of peers. As we expect from the theoretical analysis, here the Precision suffers a reduction and the Average Relative Error increases when the number of peers grows too much respect to the number of counters used (the default value is fixed to $2200$) and the number of rounds executed (the default value is fixed to $24$).

\begin{figure*}[h]
	\centering
	\begin{tabular}{ccc}		
		\subfloat[Recall]{
			\includegraphics[width=0.3\textwidth]{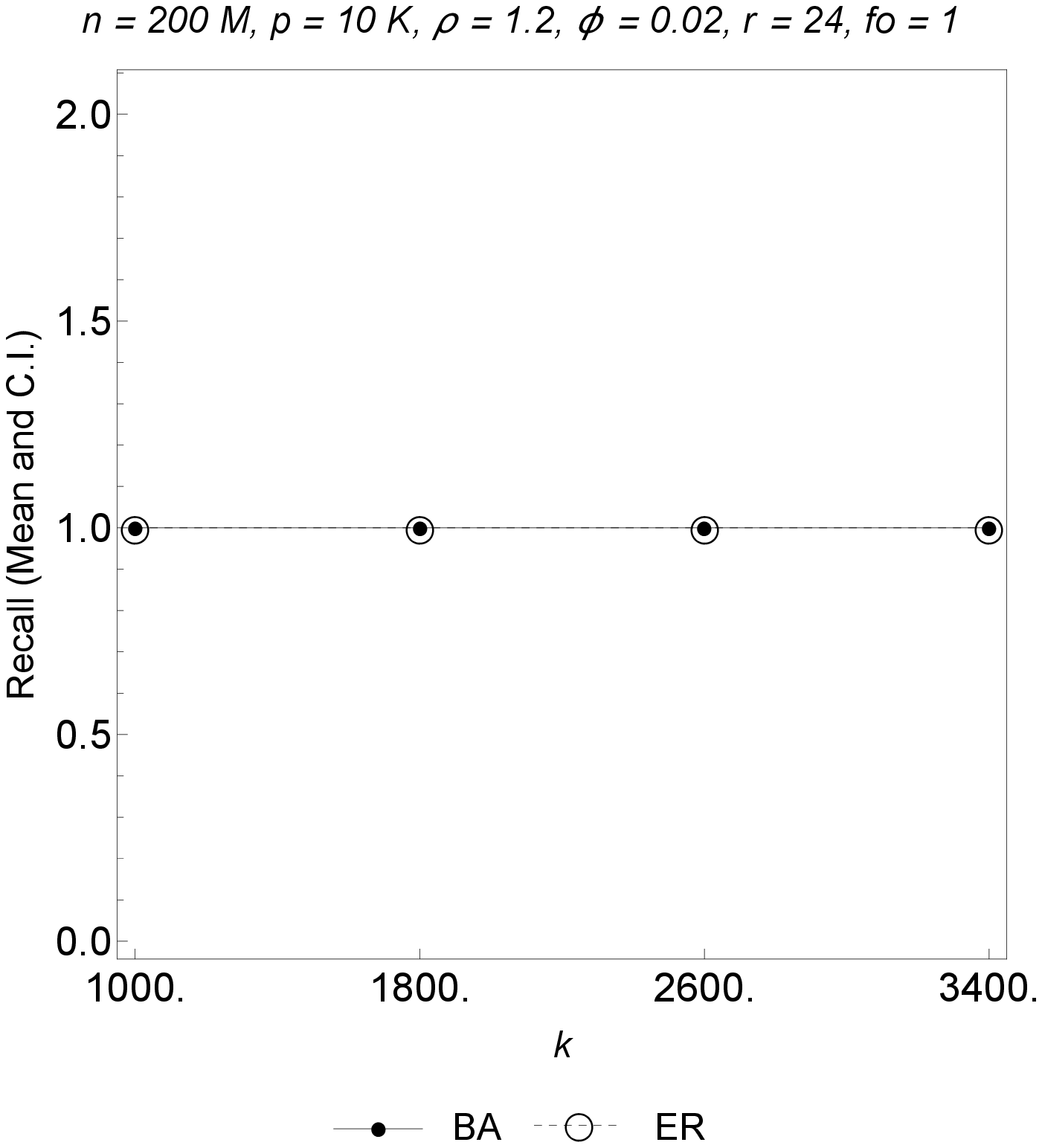}
			\label{k-rec}
		} &
		
		\subfloat[Precision]{
			\includegraphics[width=0.3\textwidth]{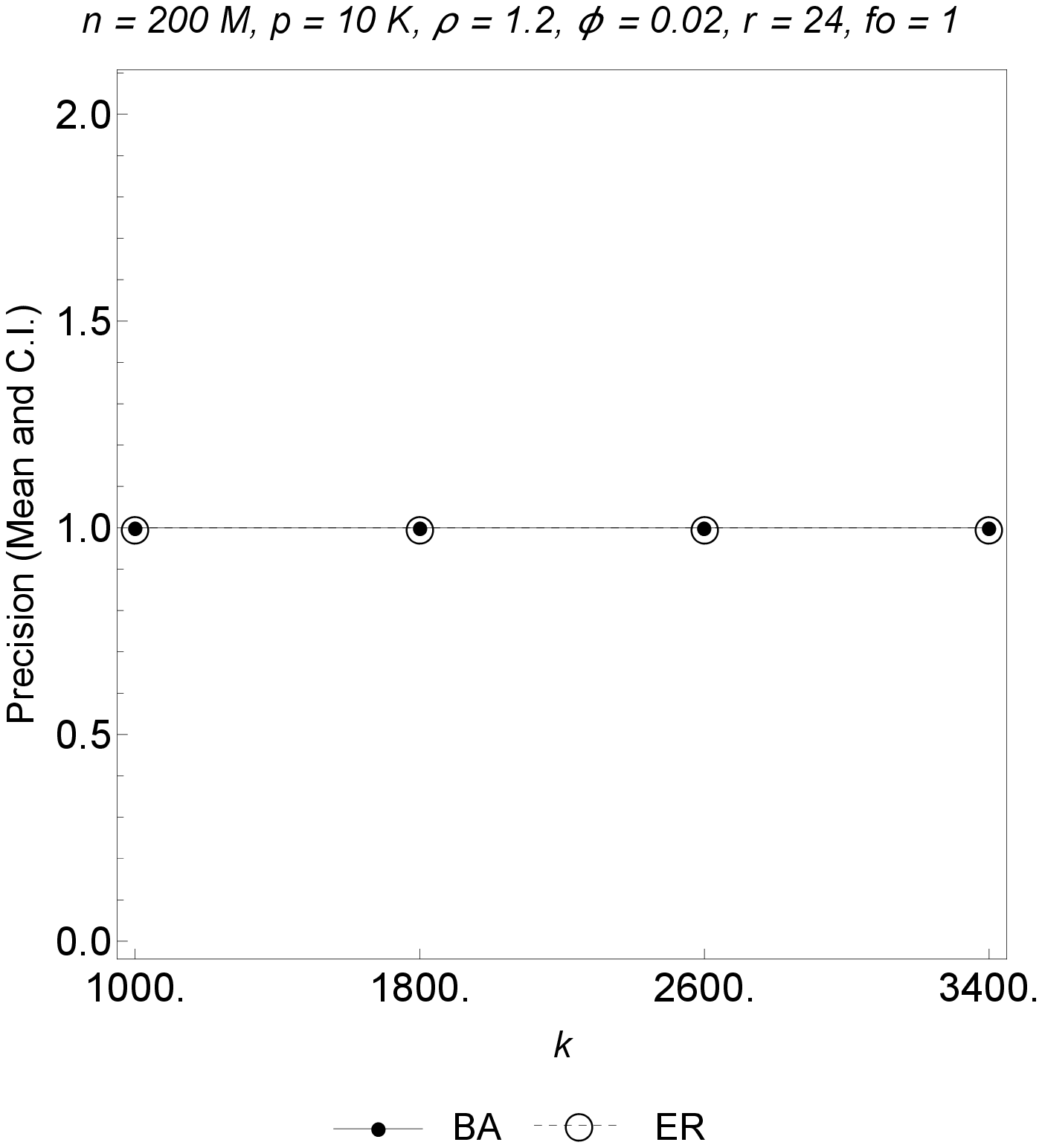}
			\label{k-prec}
		} &
		
			\subfloat[Average Relative Error ]{
				\includegraphics[width=0.3\textwidth]{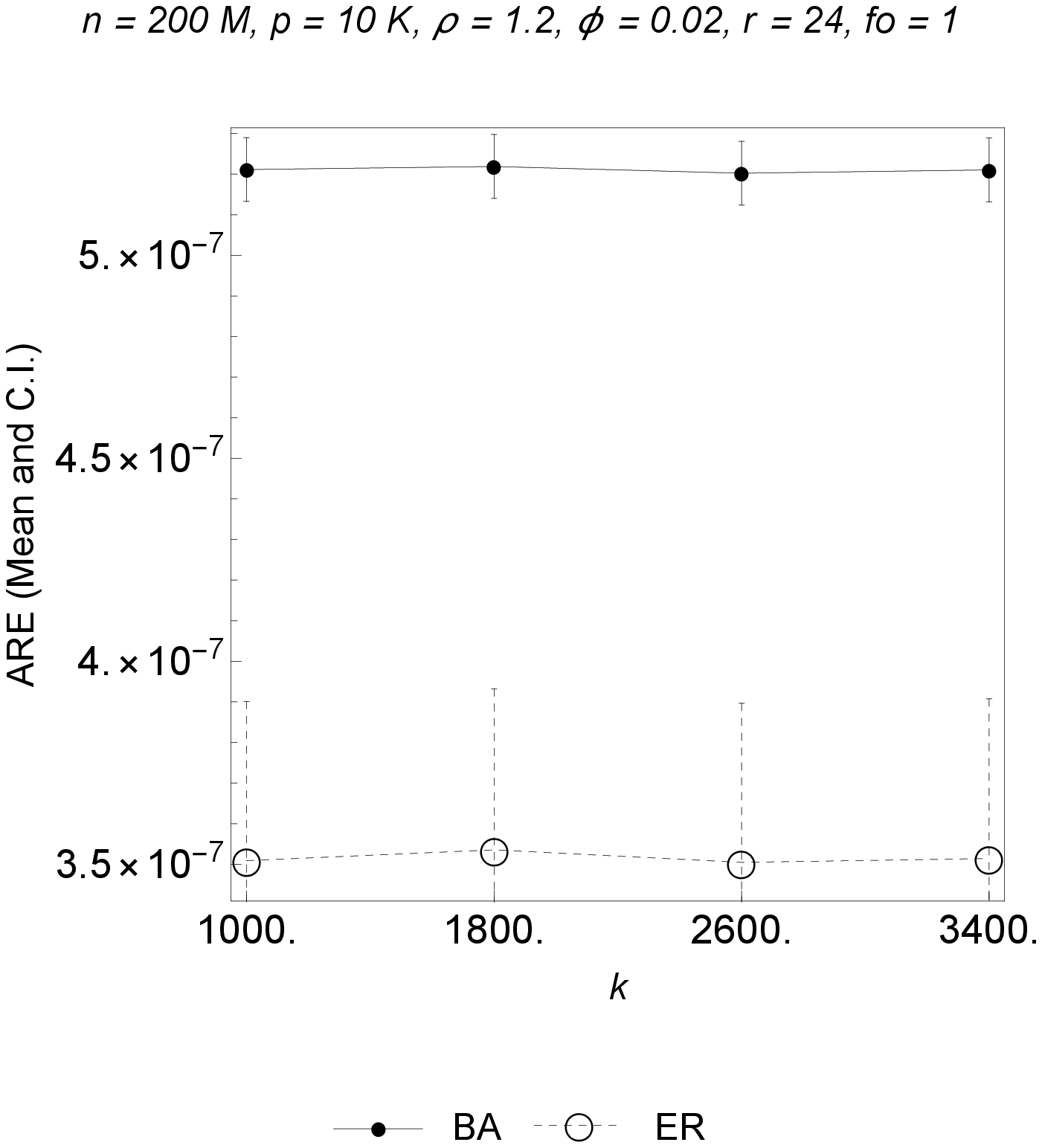}
				\label{k-are}
			} 
	\end{tabular}
	
	\caption{Recall, Precision and Average Relative Error (mean and confidence interval) varying the number of Space-Saving counters used by each peer,  for both a Barabasi-Albert (BA) and an Erdos-Renyi (ER) type of network graph.} 
	\label{k_plot}
\end{figure*}

The plots related to the experiments in which we varied the number $k$ of Space-Saving counters (Figure~\ref{k_plot}) do not present particular behaviours in the interval of values tested, showing that in this case the number of counters used were always enough with regard to the number of rounds executed in order to guarantee a good accuracy.

\begin{figure*}[h]
	\centering
	\begin{tabular}{ccc}		
		\subfloat[Recall]{
			\includegraphics[width=0.3\textwidth]{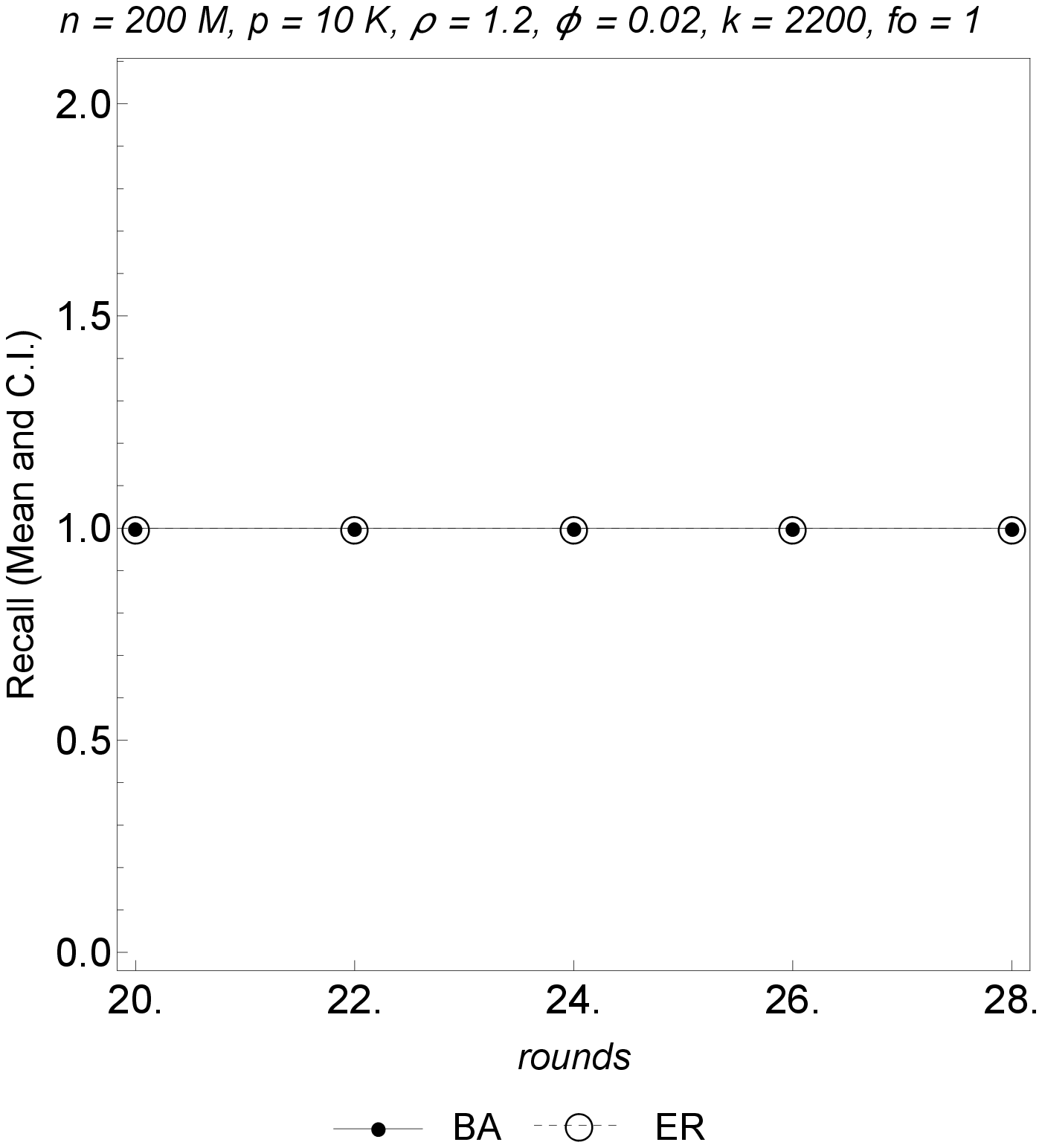}
			\label{r-rec}
		} &
		
		\subfloat[Precision]{
			\includegraphics[width=0.3\textwidth]{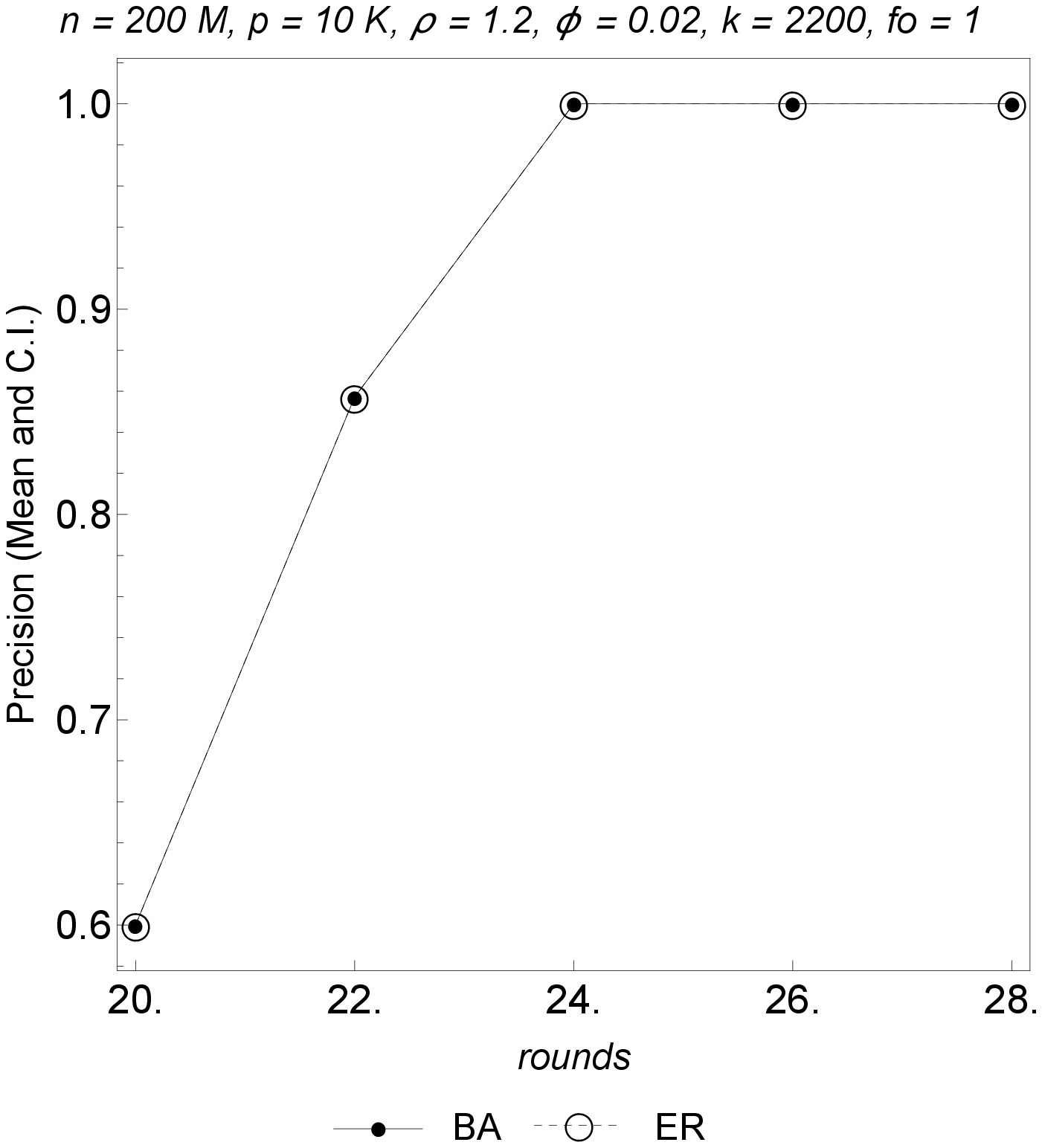}
			\label{r-prec}
		} &
		
			\subfloat[Average Relative Error ]{
				\includegraphics[width=0.3\textwidth]{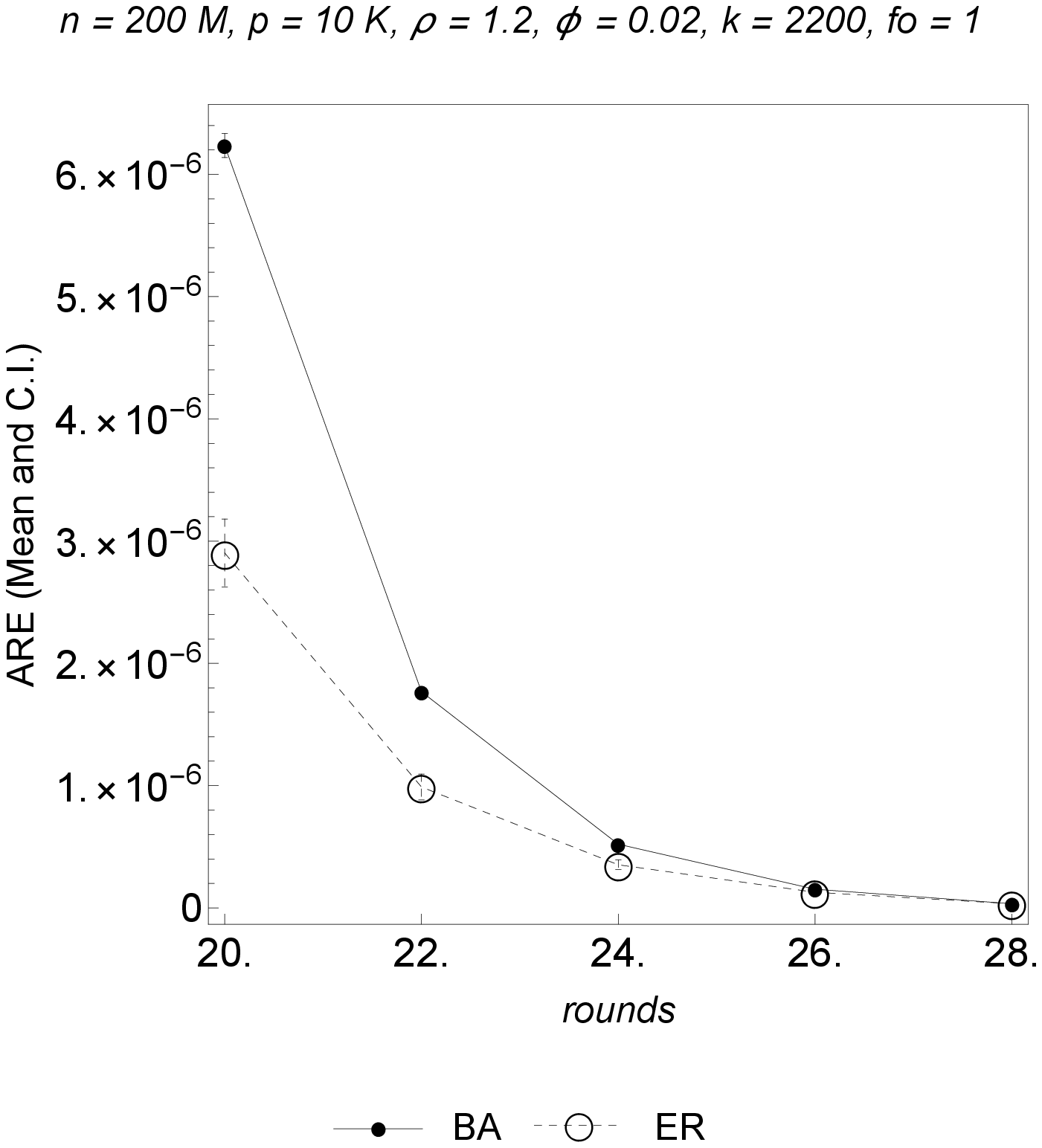}
				\label{r-are}
			} 

	\end{tabular}
	
	\caption{Recall, Precision and Average Relative Error (mean and confidence interval) varying the number of rounds executed,  for both a Barabasi-Albert (BA) and an Erdos-Renyi (ER) type of network graph.} 
	\label{r_plot}
\end{figure*}

A major sensitivity is exhibited by the algorithm when the number of rounds executed is varied, Figure~\ref{r_plot}. We note that the Precision grows and the Average Relative Error decreases as the number of rounds increases. This behaviour is expected, given the theoretical analysis. 

Overall the experiments show that our algorithm exhibits very good performance in terms of Recall, Precision, and Average Relative Error of the frequency estimation when the guidance of the theoretical analysis is taken into account in determining the number of counters used and the number of rounds to be executed. Furthermore, the algorithm proves to be very robust to variations in the skewness of the input dataset and the frequent items threshold.

\subsection{Effect of churn}
\label{effect_of_churn}
In order to verify the efficiency of our P2PSS algorithm in realistic P2P networks, we have carried out further experiments  introducing churning based on two different models: the \textit{fail-stop} model and the \textit{Yao} model, proposed by Yao et al. \cite{4110276}. 

In the \textit{fail-stop} model, a peer could leave the network with a given failure probability and the failed peers can not join the network anymore. 

In the \textit{Yao} model, peers randomly join and leave the network. For each peer $i$, a random average \textit{lifetime} duration $l_i$ is generated from a Shifted Pareto distribution with parameters $\alpha=3$, $\beta = 1$ and $\mu = 1.01$. Similarly, a random average \textit{offline} duration $d_i$ is generated from a Shifted Pareto distribution with the same $\alpha$ and $\mu$ parameter values and with $\beta = 2$. We recall here that if $X \sim \text{Pareto(II)}(\mu,\beta,\alpha)$, i.e., $X$ is a random variable with a Pareto Type II distribution (also named Shifted Pareto), then its cumulative distribution function is $F_X(x) = 1 - \left(1 + \frac{x - \mu}{\beta}\right)^{-\alpha}$.

The values $l_i$ and $d_i$  are used to configure, for each peer $i$, two distributions $F_i$ and $G_i$. The distributions $G_i$ are Shifted Pareto distributions with $\beta = 3$ and $\alpha = 2d_i$, whilst the distributions $F_i$ can be both Pareto distributions with $\beta = 2$, $\alpha = 2l_i$, or exponential distributions with $\lambda = 1/l_i$. Whenever the state of a peer changes, a duration value is drawn from one of the distributions, $F_i$ or $G_i$, based on the type of duration values (lifetime or offline) which must be generated.  We carried out our experiments with both the Pareto and Exponential lifetimes variants.

In order to correctly manage the churning of the peers, the algorithm must be modified as follows. We assume that a peer can detect a neighbour failure, then:
\begin{itemize}
	\item if a peer fails before sending a push message or after receiving a pull message, that is, when no  communications are ongoing, then no actions have to be performed;
	\item if a peer $p$ fails before sending a pull message to peer $r$ in response to its push message, then the peer $r$ detects the failure and simply cancels the push--pull exchange, so that its state does not change;
	\item if a peer $p$ fails after sending a push message to a peer $r$ and before receiving the corresponding pull message, then the peer $r$ detects the failure and restores its own local state as it was before the push--pull exchange.
\end{itemize}

When using the fail-stop model, we tested our algorithm with the default parameter values of Table \ref{experiments}, varying the failure probability through the values: $0.0, 0.01, 0.05, 0.1$. 
As shown in Figures~\ref{fp-rec} and \ref{fp-prec}, the recall and precision metrics are not affected at all by the introduction of peer failures up to a failure probability equal to $0.1$. However, as expected, Figure~\ref{fp-are} shows that the average relative error on frequency estimations gets worse going from about $10^{-6}$ in case of no churn to about $10^{-2}$ when the failure probability is $0.1$. 

\begin{figure*}[h]
	\centering
	\begin{tabular}{ccc}		
		\subfloat[Recall]{
			\includegraphics[width=0.3\textwidth]{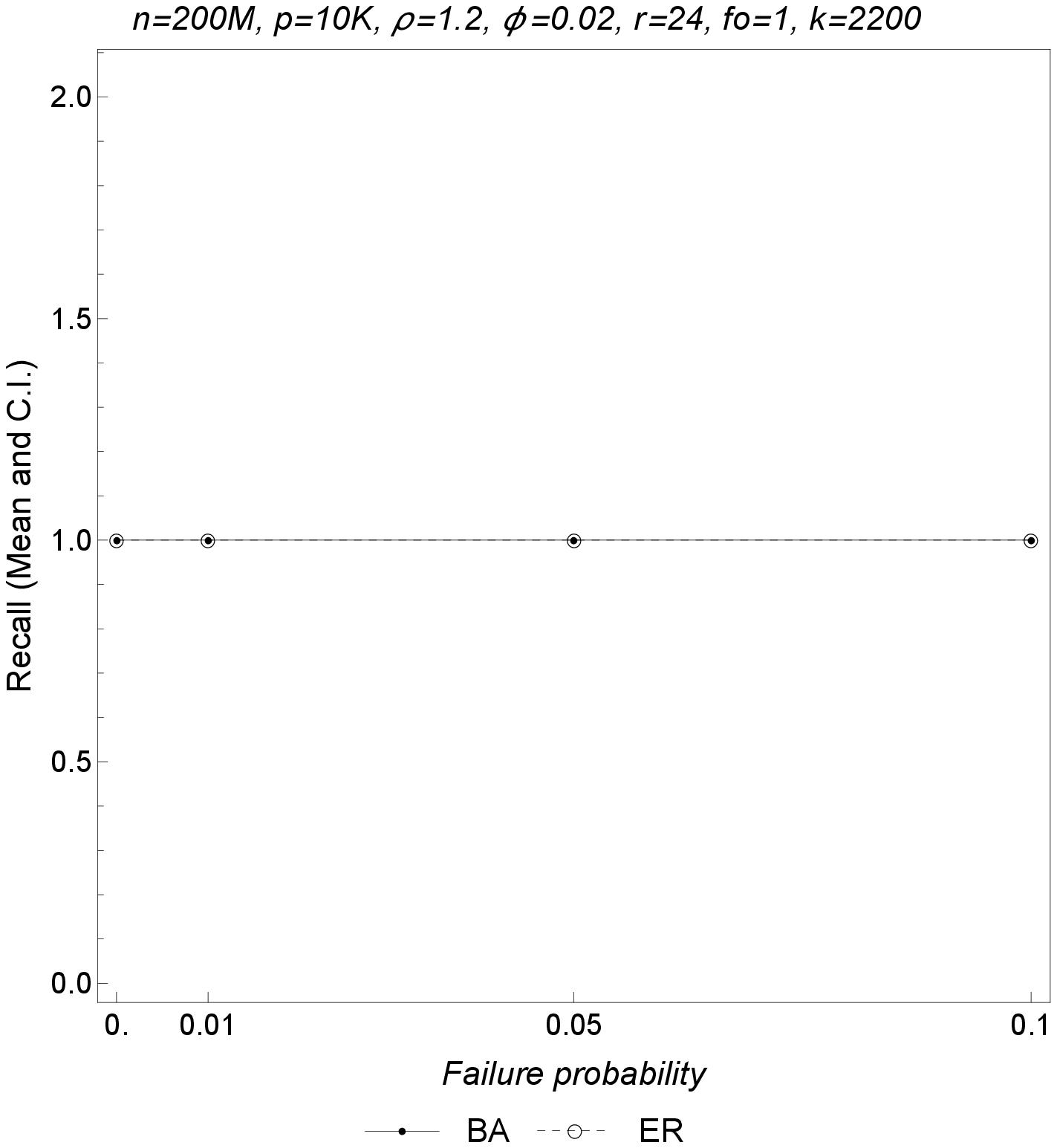}
			\label{fp-rec}
		} &
		
		\subfloat[Precision]{
			\includegraphics[width=0.3\textwidth]{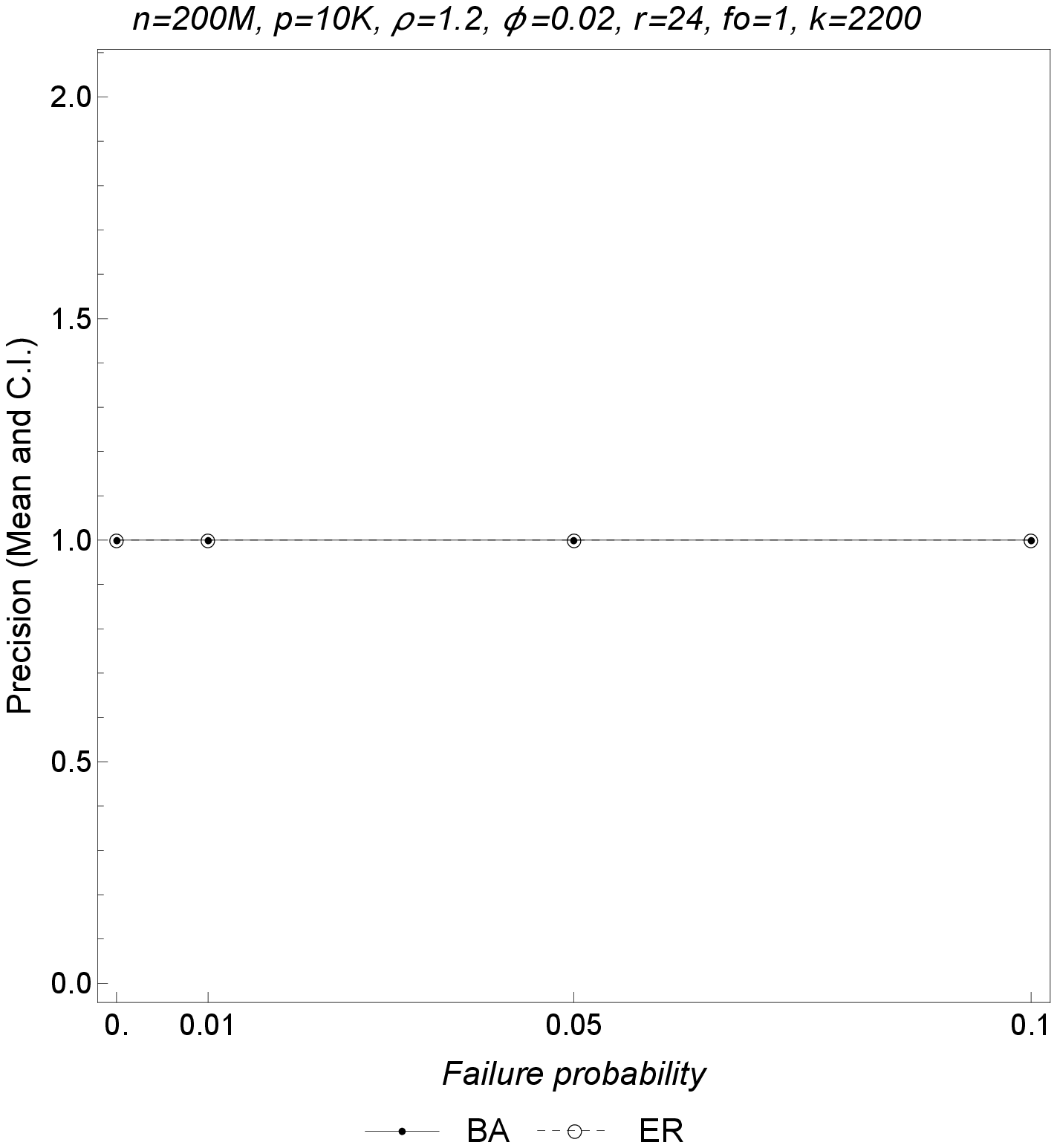}
			\label{fp-prec}
		} &
		
		\subfloat[Average Relative Error ]{
			\includegraphics[width=0.3\textwidth]{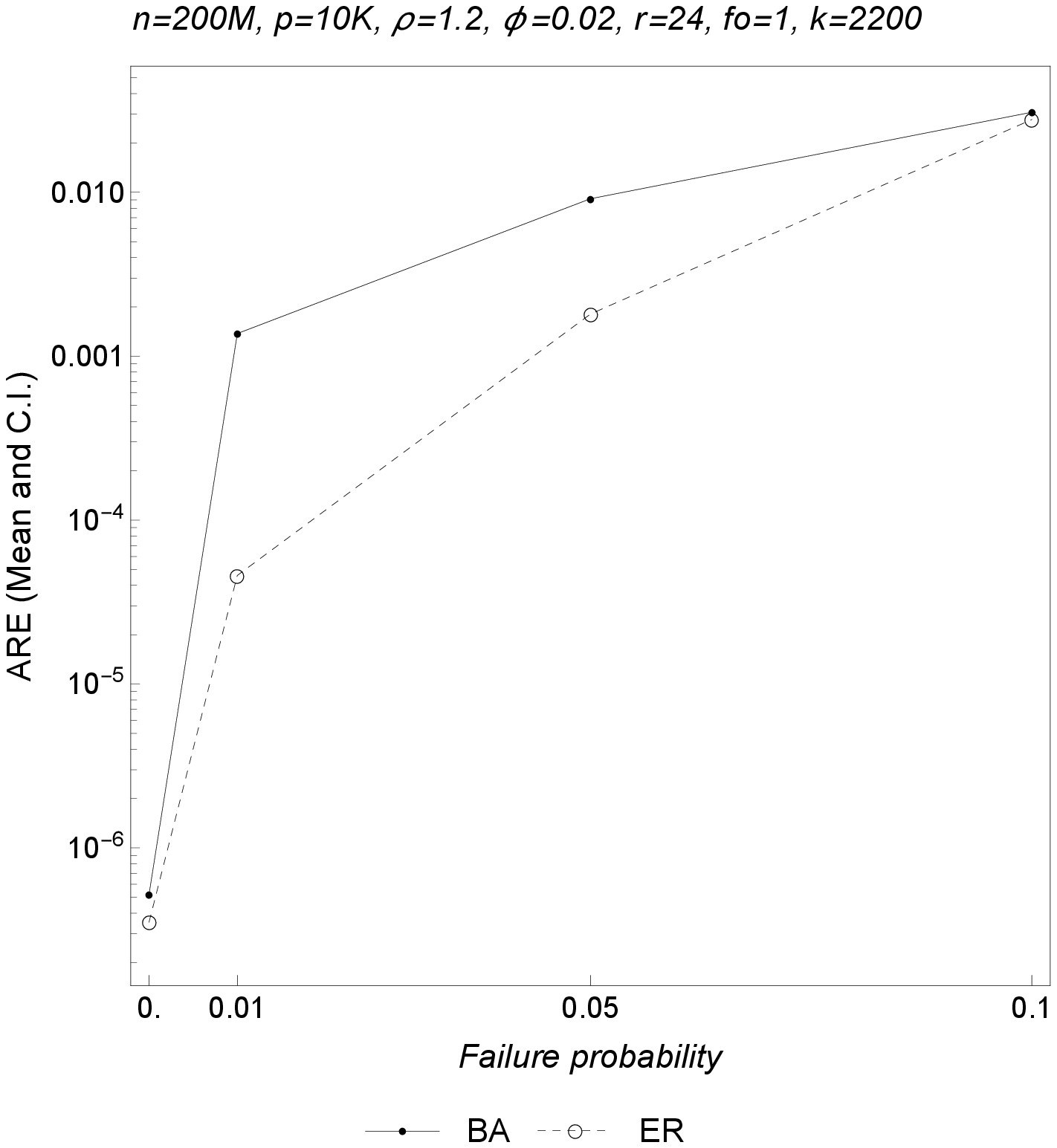}
			\label{fp-are}
		} 
		
	\end{tabular}
	
	\caption{Recall, Precision and Average Relative Error (mean and confidence interval) varying the failure probability in a fail-stop model of churning,  for both Barabasi-Albert (BA) and Erdos-Renyi (ER) random network graphs.} 
	\label{r_plot}
\end{figure*}

When the Yao model of churning was adopted, we tested our algorithm with the default values of Table \ref{experiments} and the parameters of churning already discussed, varying the maximum number of peers  and the number of rounds. Also in these cases, recall and precision are not  affected by the introduction of churning. Indeed, we obtained for recall and precision varying the number of peers and the number of rounds the same plots as Figures~\ref{p-rec}, \ref{p-prec}, \ref{r-rec} and \ref{r-prec}; for this reason we do not report these plots again. On the other hand, the average relative error is affected by the churning, as expected: Figures \ref{p-churn} and \ref{r-churn} are related respectively to the ARE measured varying the number of peers and the number of rounds with Pareto distributions for lifetimes, whilst Figures \ref{p-expchurn} and \ref{r-expchurn} refer to the ARE measured when using Exponential distributions for lifetimes.

\begin{figure*}[h]
	\centering
	\begin{tabular}{cc}		
		\subfloat[Average Relative Error]{
			\includegraphics[width=0.45\textwidth]{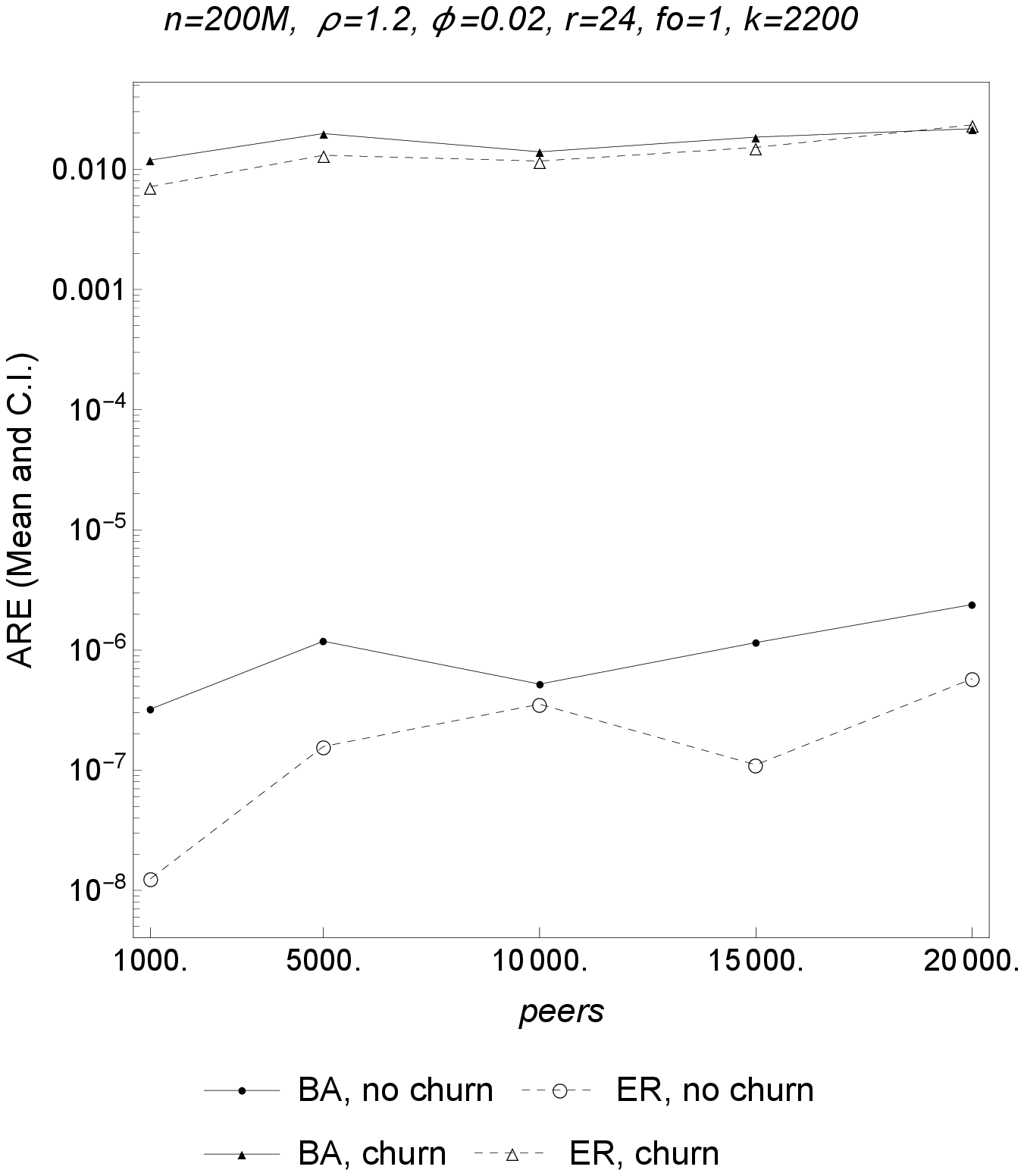}
			\label{p-churn}
		} &
		
		\subfloat[Average Relative Error]{
			\includegraphics[width=0.45\textwidth]{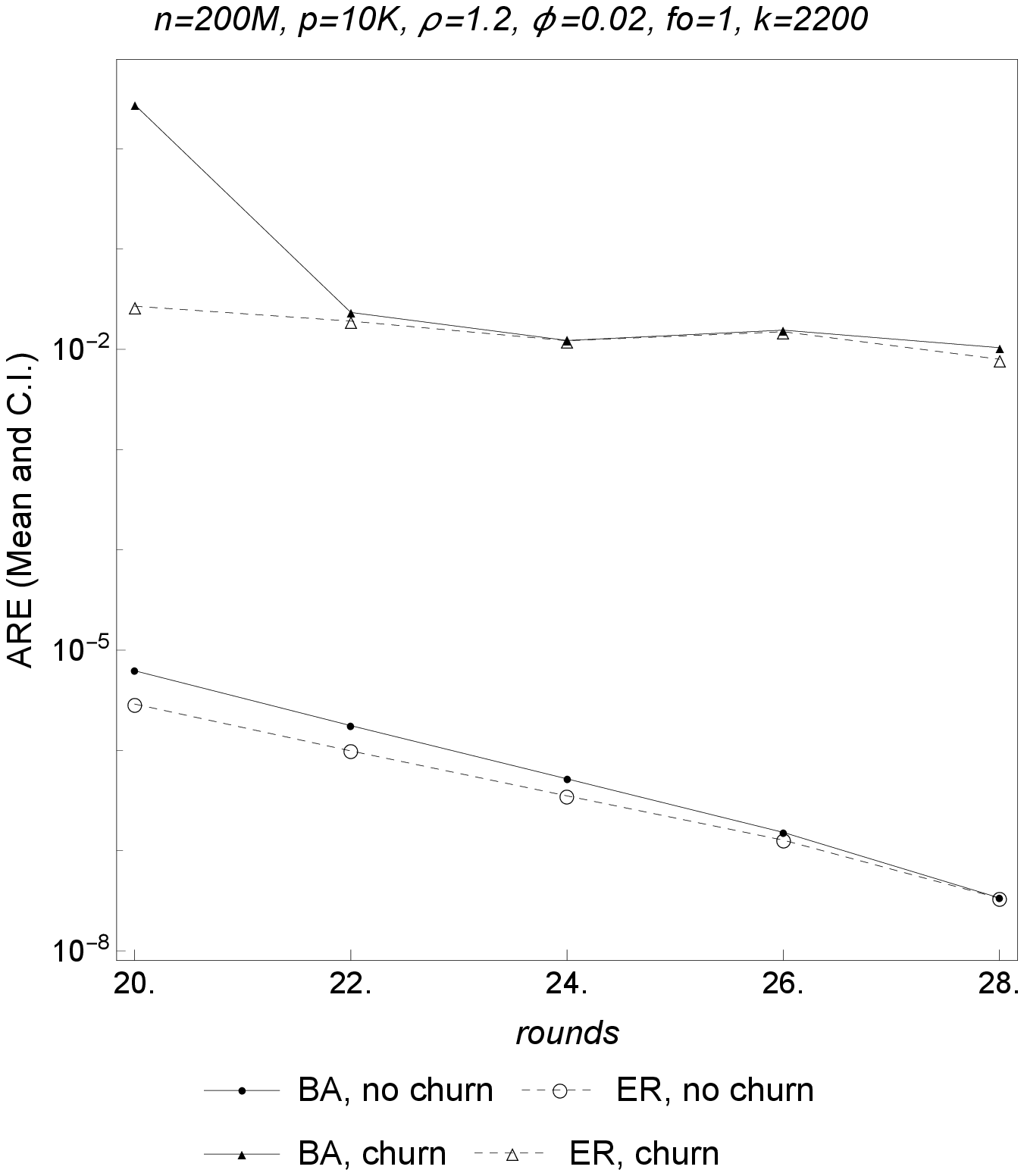}
			\label{r-churn}
		} \\
		
		\subfloat[Average Relative Error ]{
			\includegraphics[width=0.45\textwidth]{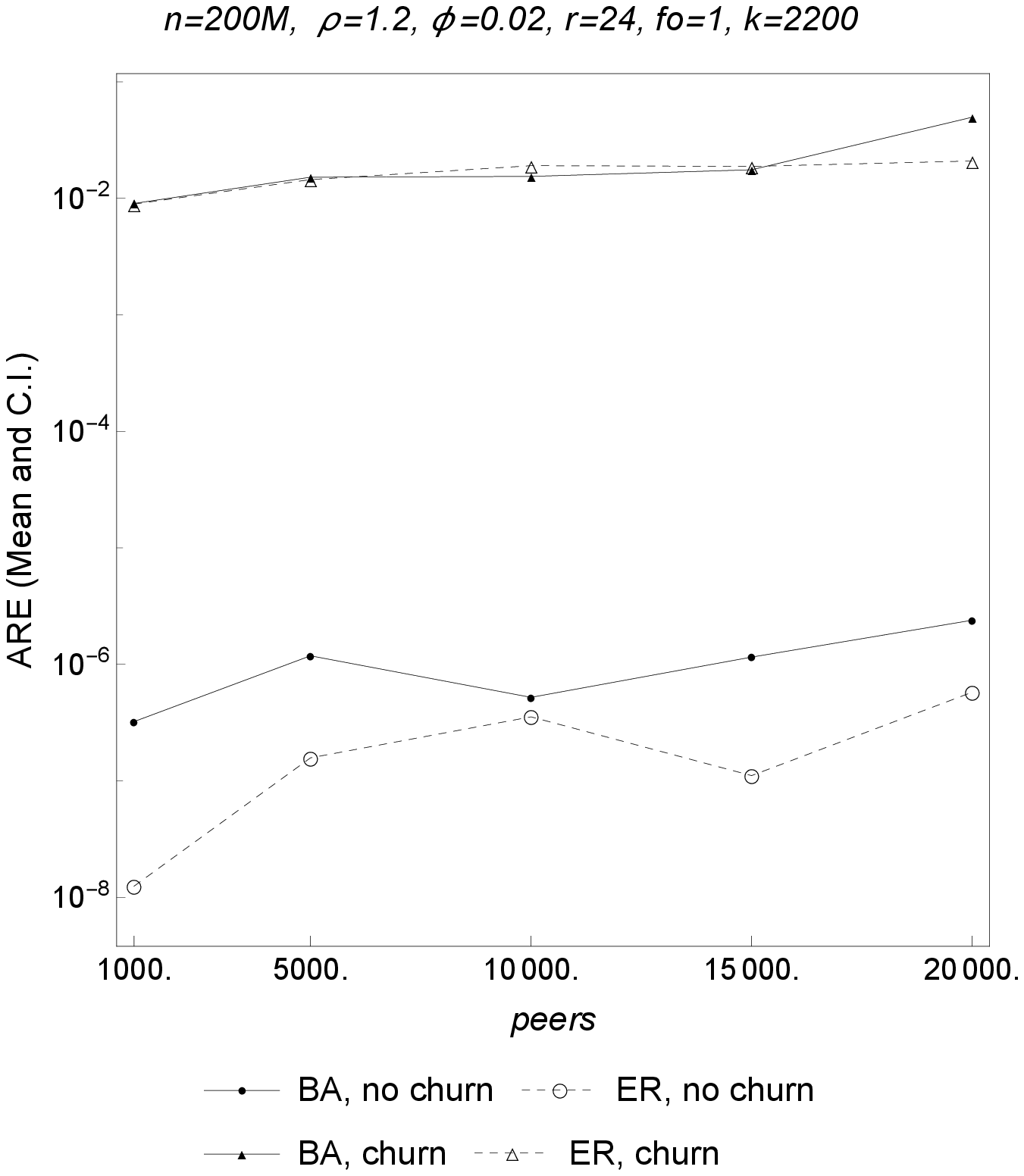}
			\label{p-expchurn}
		} &
	
		\subfloat[Average Relative Error ]{
			\includegraphics[width=0.45\textwidth]{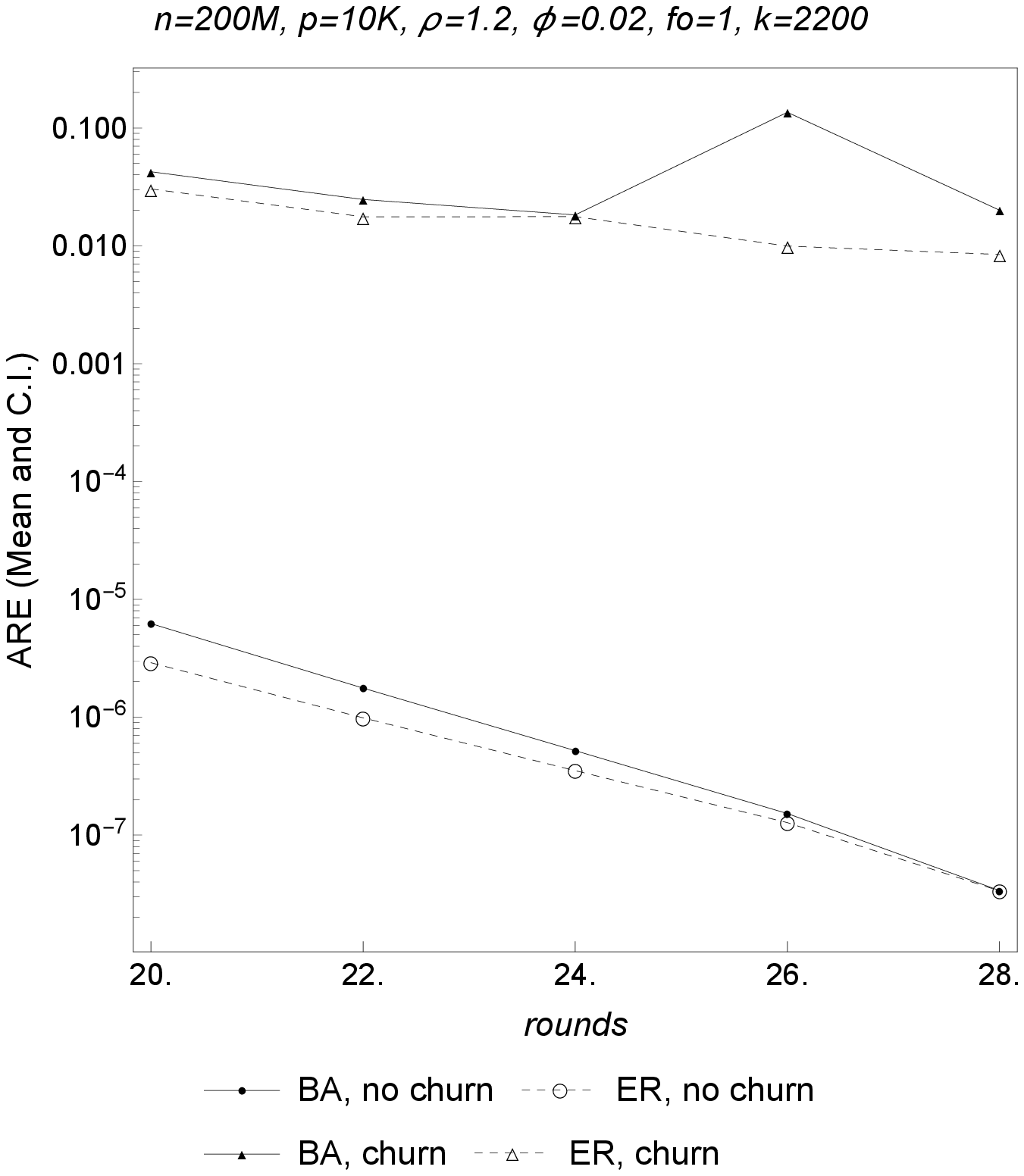}
			\label{r-expchurn}
		} 
		
	\end{tabular}
	
	\caption{Average Relative Error (mean and confidence interval) varying the number of peers and rounds in the Yao model of churning with Pareto (a,b) or Exponential (c, d) lifetimes,  for both Barabasi-Albert (BA) and Erdos-Renyi (ER) random network graphs.} 
	\label{r_plot}
\end{figure*}

\section{Related work}
\label{related}

In this Section, we recall the most important sequential, parallel and distributed algorithms for the frequent items problem. The items to be mined may belong to either a static dataset or, in the most general setting, to a stream. In the former case, all of the data is already available in advance, whilst in the latter data arrives or can be accessed only sequentially and in a given order; no random access to the data is allowed. 

Sequential algorithms can be broadly classified as either deterministic, counter--based or randomized, sketch--based. A counter--based algorithm works by updating a so called summary (or synopsis) data structure. The summary is updated at each item arrival and requires a bounded amount of memory, much smaller than that necessary for storing the entire input. Queries are answered using that summary, and the time for processing an item and computing the answer to a given query is limited. Sketch--based algorithms process items using a sketch, which is a bi-dimensional array of counters. Each input item is mapped, through hash functions, to corresponding sketch cells whose values are then updated as required by the algorithm. 

The seminal counter--based algorithm proposed by Misra and Gries \cite{Misra82} has been independently rediscovered and improved (with regard to its computational complexity) by Demaine et al. \cite{DemaineLM02} (the so-called \emph{Frequent} algorithm) and Karp et al. \cite{Karp}. Among the counter--based algorithms, we recall here \emph{Sticky Sampling}, \emph{Lossy Counting} \cite{Manku02approximatefrequency}, and \emph{Space-Saving} \cite{Metwally2006}. In Particular, among counter--based algorithms, Space-Saving provides the best accuracy whilst requiring the minimum number of counters and constant time complexity to update its summary upon an item arrival. Notable sketch--based algorithms are \emph{CountSketch} \cite{Charikar}, \emph{Group Test} \cite{Cormode-grouptest}, \emph{Count-Min} \cite{Cormode05} and \emph{hCount} \cite{Jin03}.

Regarding parallel algorithms, \cite{cafaro-tempesta} (slightly improved in \cite{cafaro-pulimeno}) and \cite{cafaro-pulimeno-tempesta} present message-passing based parallel versions of the Frequent and Space-Saving algorithms. Among the algorithms for shared-memory architectures we recall here a parallel version of Frequent \cite{Zhang2013}, a parallel version of Lossy Counting \cite{Zhang2012}, and parallel versions of Space-Saving \cite{Roy2012} and \cite{Das2009}. Novel shared-memory parallel algorithms for frequent items were recently proposed in \cite{Tangwongsan2014}. Accelerator based algorithms for frequent items exploiting a GPU (Graphics Processing Unit) include \cite{Govindaraju2005}, \cite{Erra2012}, \cite{cafaro-epicoco-aloisio-pulimeno} and \cite{CPE:CPE4160}. 

Some applications are concerned with the problem of detecting frequent items in a stream with the additional constraint that recent items must be weighted more than former items. The underlying assumption is that recent data is certainly more useful and valuable than older, stale data. Therefore, each item in the stream has an associated timestamp that shall be used to determine its weight. In practice, instead of estimating frequency counts, an application must be able to estimate \emph{decayed counts}. Two different models have been proposed in the literature: the \emph{sliding window} and the \emph{time fading} model.

In the sliding window model \cite{Datar} \cite{TCS-002}, freshness of recent items is captured by a time window, i.e., a temporal interval of fixed size in which only the most recent $N$ items are taken into account; detection of frequent items is strictly related to those items falling in the window. The items in the stream become stale over time, since the window periodically slides forward.

The time fading model \cite{exp-decay} does not use a window sliding over time; freshness of more recent items is instead emphasized by \emph{fading} the frequency count of older items. This is achieved by using a decaying factor $0 < \lambda < 1$ to compute an item's \textit{decayed count} (also called \textit{decayed frequency}) through decay functions that assign greater weight to more recent elements. The older an item, the lower its decayed count is: in the case of exponential decay, the weight of an item occurred $n$ time units in the past, is $e^{-\lambda n}$, which is an exponentially decreasing quantity. Mining time faded frequent items has been investigated in \cite{Chen-Mei}, \cite{Cafaro-Pulimeno-Epicoco-Aloisio}, \cite{Wu2017}, \cite{8091102}. A parallel message-passing based algorithm has been recently proposed in \cite{CAFARO2018115}. 

Regarding the Correlated Heavy Hitters Problem (CHHs), an algorithm based on the nested application of Frequent has been recently presented in \cite{Lahiri2016}. The outermost application mines the primary dimension, whilst the innermost one mines correlated secondary items. The main drawbacks of this algorithm, being based on Frequent, are the accuracy (which is very low), the huge amount of space required and the rather slow speed (owing to the nested summaries).

In \cite{Epicoco:2018:FAM:3182040.3182103}, a faster and more accurate algorithm for mining CHHs is proposed. The Cascading Space-Saving Correlated Heavy Hitters (CSSCHH) algorithm exploits the basic ideas of Space-Saving, combining two summaries for tracking the primary item frequencies and the tuple frequencies. The algorithm is referred to as Cascading Space-Saving since it is based on the use of two distinct and independent applications of Space-Saving.

Let us now discuss related work focusing on the P2P approach. Since our algorithm is designed for unstructured P2P networks and is based on a gossip protocol \cite{Demers:1987}, among the many distributed algorithms for mining frequent items (e.g., \cite{Cao:2004}, \cite{Zhao:2006}, \cite{Keralapura:2006}, \cite{recent-freq-items}, \cite{Venkataraman}, \cite{MENG201629}) we only discuss \cite{computing13}, \cite{CEM20131544}, \cite{LAHIRI20101241}. 

The algorithms presented in \cite{computing13} and \cite{CEM20131544} are very similar. Each peer starts with a local subset of the whole dataset to be mined, and it is explicitly assumed that each peer can store the whole dataset, i.e., the dataset resulting from the union of the local datasets; the whole dataset is obtained as a result of the periodic gossip exchanges, in which the peers send their local dataset, receive their neighbours' datasets and merge them; this is known as averaging gossip protocol. The number of peers can be estimated by using the same approach, in which one of the peers starts with a value equal to one and all of the others with a value equal to zero. The convergence properties of the averaging gossip protocol have been thoroughly studied in \cite{Jelasity2005}, and it has been shown that each round contributes to reducing the variance around the mean value that is being computed.

In order to reduce the communication complexity, \cite{computing13} suggests alternatively to exchange only the top-$k$ most frequent items where $k$ is a user's defined parameter. The termination condition is based on the following convergence criterion: the algorithm stops when for all of the peers, the subset consisting of the top-$k$ items does not change for a specified number of consecutive rounds.  

The algorithm presented in \cite{CEM20131544} tries to reduce the communication complexity in a different way. It uses an additional data structure, an hash table in which all the items seen are stored (these items are never deleted) and from which the algorithm randomly selects a specified number of items corresponding to a predefined message size. The termination condition is based on a convergence criterion requiring two user's defined parameters: $\epsilon$ and $convLimit$. If the absolute difference between the true and the estimated frequencies of all of the itemss is less than or equal to $\epsilon$ for at least $convLimit$ consecutive rounds, the algorithm stops its execution.

It is clear from the previous discussion that \cite{computing13} and \cite{CEM20131544} require space complexity linear in the length $n$ of the dataset; this allows solving the \textit{exact} problem rather than the approximate problem. 

In \cite{LAHIRI20101241}, the authors provide a randomized approach based on a random sampling of the items and the averaging gossip protocol. A random weight in the interval (0, 1) is assigned to each item. The algorithm maintains and exchanges in each round a data structure consisting of $t$ items whose weight is the lowest, where $t = \frac{128}{\psi^2} \ln \frac{3}{\delta}$, $\psi$ is an error threshold and $\delta$ the probability of failure. Even though the authors prove the theoretical properties of their algorithm, we remark here that the approach can only detect frequent items but does not provide any kind of frequency estimation: the algorithm returns a list of items that with high probability (defined by $\delta$) contains the frequent items (with regard to the $\psi$ threshold). Regarding the space used, for each of the $t$ items the algorithm stores a tuple consisting of four fields: the peer identifier, the item index in the peer's local dataset, the item value and its random weight.

We remark here that \cite{computing13}, \cite{CEM20131544} do not solve the \textit{Approximate Frequent Items Problem in Unstructured P2P Networks} and that \cite{LAHIRI20101241} does not provide frequency estimation of the discovered frequent items. In contrast, our algorithm solves the Approximate Frequent Items Problem in Unstructured P2P Networks, and it does so by using very little space: each peer uses exactly the same stream summary data structure that would be used by a centralized algorithm. Moreover, to the best of our knowledge, we provide the first distributed algorithm for the Approximate Frequent Items Problem in Unstructured P2P Networks using a gossip--based protocol with strong theoretical guarantees for both the Approximate Frequent Items Problem in Unstructured P2P Networks and for frequency estimation of the discovered frequent items.

\section{Conclusions}
\label{conclusions}

In this paper, we have dealt with the problem of mining frequent items in unstructured P2P networks. This problem, of practical importance, has many useful applications. We have designed P2PSS, a fully decentralized, gossip--based protocol for frequent items discovery, leveraging the Space-Saving algorithm. We have formally proved the correctness and theoretical error bound of the algorithm, and shown, through extensive experimental results, that P2PSS provides very good accuracy and scalability, also in the presence of highly dynamic P2P networks with churning. To the best of our knowledge, this is the first gossip--based distributed algorithm providing strong theoretical guarantees for both the Approximate Frequent Items Problem in Unstructured P2P Networks and for frequency estimation of the discovered frequent items.

%%%%%%%%%%%%%%%%%%%%%%%%%%%%%%%%%%%%%%%%%%%%%%%%%%%%%%%%%%%
% the following \clearpage command will prevent floats to appear in or after the references
\clearpage

\bibliographystyle{elsarticle-num}
\bibliography{bibliography}

% that's all folks
\end{document}